\newcommand{\comment}[1]{}
\newcommand{\norm}[1]{\lVert#1\rVert}
\newcommand{\idty}[1]{\mathbb{1}}
\newcommand{\ovsqrt}[1]{\frac{1}{\sqrt{2}}}
\newcommand{\tr}[1]{\mathrm{Tr}}
\renewcommand{\tr}[1]{\mathrm{Tr}\left( #1 \right)}
\newtheorem{theorem}{Theorem}
\newtheorem{corollary}{Corollary}
\newtheorem{lemma}{Lemma}
\newtheorem{prop}[theorem]{Proposition}
\newtheorem{definition}{Definition}
\newcommand{\ketbra}[2]{\ket{#1}\!\bra{#2}}
\newcommand{\pnorm}[2]{\left\|#2\right\|_#1}
\newcommand{\maxnorm}[1]{\left\|#1\right\|_{\mathrm{max}}}
\newcommand{\polylog}{\mathrm{polylog}}
\newcommand{\poly}{\mathrm{poly}}
\title{A quantum algorithm for simulating non-sparse Hamiltonians}
\author{Chunhao Wang\thanks{Institute for Quantum Computing and School of Computer Science, University of Waterloo}
\thanks{Department of Computer Science, University of Texas at Austin} 
\and Leonard Wossnig\thanks{Department of Computer Science, University College London}}
\begin{document}

\date{\empty}
\maketitle
\begin{abstract}
  We present a quantum algorithm for simulating the dynamics of Hamiltonians that are not necessarily sparse. Our algorithm is based on the input model where the entries of the Hamiltonian are stored in a data structure in a quantum random access memory (qRAM) which allows for the efficient preparation of states that encode the rows of the Hamiltonian. We use a linear combination of quantum walks to achieve poly-logarithmic dependence on precision. The time complexity of our algorithm, measured in terms of the circuit depth, is $O(t\sqrt{N}\norm{H}\,\polylog(N, t\norm{H}, 1/\epsilon))$, where $t$ is the evolution time, $N$ is the dimension of the system, and $\epsilon$ is the error in the final state, which we call precision. Our algorithm can be directly applied as a subroutine for unitary implementation and quantum linear systems solvers, achieving $\widetilde{O}(\sqrt{N})$ dependence for both applications.
\end{abstract}

\section{Introduction}
\subsection{Background and main results}
Hamiltonian simulation is the problem of simulating the dynamics of quantum systems, which is the original motivation for quantum computers~\cite{feynman1982simulating,feynman1986quantum}. It has been shown to be BQP-hard and is hence conjectured not to be classically solvable in polynomial time, since such an algorithm would imply an efficient classical algorithm for any problem with an efficient quantum algorithm, including integer factorization~\cite{shor1999polynomial}.

Different input models have been considered in previous quantum algorithms for simulating Hamiltonian evolution. The \emph{local Hamiltonian} model is specified by the local terms of a Hamiltonian. The \emph{sparse-access} model for a sparse Hamiltonian $H$ is specified by the following two oracles:
    \begin{align}
      O_{S} \ket{i,j} \ket{z} &\mapsto \ket{i,j} \ket{z \oplus S_{i,j}}, \text{ and} \\
      \label{eq:oh}
      O_{H} \ket{i,j} \ket{z} &\mapsto \ket{i,j} \ket{z \oplus H_{i,j}},
    \end{align}
    where $S_{i,j}$ is the $j$-th nonzero entry of the $i$-th row and $\oplus$ denotes the bit-wise XOR. The \emph{linear combination of unitaries} (\emph{LCU}) model is specified by a decomposition of a Hamiltonian as a linear combination of unitaries and we are given the coefficients and access to each unitary.
Following the first proposal by Lloyd~\cite{lloyd1996universal} for local Hamiltonians, Aharonov and Ta-Shma gave an efficient algorithm for sparse Hamiltonians~\cite{aharonov2003adiabatic}. Subsequently, many algorithms have been proposed which improved the runtime~\cite{berry2007efficient,berry2009black,berry2017exponential,childs2010relationship,childs2012hamiltonian,poulin2011quantum,wiebe2011simulating,low2016hamiltonian,berry2016corrected,low2017hamiltonian}, mostly in the sparse-access model, and have recently culminated in a few works with optimal dependence on all (or nearly all) parameters for sparse Hamiltonians~\cite{berry2015hamiltonian,low2017optimal,low2018hamiltonian}.

While the above-mentioned input models arise naturally in many physics applications and matrix arithmetic applications (i.e., we have access to the local terms of a Hamiltonian or each entry of a Hamiltonian can be efficiently computed), in many machine learning applications, it is more convenient to work with a different input model, namely, the \emph{quantum random access memory} (\emph{qRAM}) model, where we assume that the entries of a Hamiltonian are stored in a data structure as in~\cite{kerenidis2016quantum} and we have quantum access to the memory. As we receive and process the input data, with little overhead, they can be stored in the data structure, and then the qRAM will facilitate preparing quantum states corresponding to the input data. The use of the qRAM model has been successfully demonstrated in many applications such as quantum principal component analysis~\cite{LMR14}, quantum support vector machines~\cite{RML14}, and quantum recommendation systems~\cite{kerenidis2016quantum}.

In this work, we consider the qRAM model for obtaining information about the Hamiltonian that is not necessarily sparse. Quantum access to this data structure allows us to efficiently prepare states that encode the rows of the Hamiltonian. Using the ability to prepare these states in combination with a quantum walk~\cite{berry2015hamiltonian}, we give the \emph{first} (to the best of our knowledge) Hamiltonian simulation algorithm in the qRAM model whose time complexity has $\widetilde{O}(\sqrt{N})$ dependence\footnote{In this paper, we use $\widetilde{O}(\cdot)$ to hide poly-logarithmic factors.} for non-sparse Hamiltonians of dimensionality $N$. As a subroutine of quantum linear systems solver given by~\cite{childs2017quantum}, our result directly implies a quantum linear systems solver in the qRAM model with square-root dependence on dimension and poly-logarithmic dependence on precision, which exponentially improves the precision dependence of the quantum linear systems solver by~\cite{wossnig2018quantum} with the same input model. Since solving linear systems is a fundamental procedure of many machine learning tasks, our algorithm has extensive potential applications in \emph{quantum machine learning}. 

In~\cite{kerenidis2016quantum}, a quantum algorithm for recommendation systems was introduced based on an explicit description of a data structure, which resulted in a fast quantum algorithm for estimating singular values for any real matrix.
This fast singular value estimation algorithm was used in a quantum algorithm for solving dense linear systems~\cite{wossnig2018quantum}.
The data structure in~\cite{kerenidis2016quantum, wossnig2018quantum} allows us to prepare states that correspond to row vectors.
In our algorithm for the Hamiltonian simulation problem, the main hurdle is  to efficiently prepare the states which allow for a quantum walk corresponding to $e^{-iH/\norm{H}_1}$.
These states are quite different from those in~\cite{kerenidis2016quantum,berry2009black,berry2015hamiltonian}: the states required by~\cite{berry2015hamiltonian} allow for a quantum walk corresponding to $e^{-iH/(d\maxnorm{H})}$, where $d$ is the row-sparsity of $H$, and their states can be prepared with $O(1)$ queries to the sparse-access oracle. However, the states required by our algorithm (See Eq.~\eqref{eq:mapping-ds}) is less structural and it is not known how to prepare them with poly-logarithmic cost in the sparse-access model.
In the qRAM model, we assume the entries of $H$ are stored in a data structures, which permits state preparation with time complexity (circuit depth) $O(\polylog(N))$. The precise definition of the data structure is presented in Definition~\ref{def:datastructure}.

Using the efficient state preparation procedure, we implement the linear combination of quantum walks in order to simulate the time evolution for non-sparse Hamiltonians with only poly-logarithmic dependence on precision. The main result of this work is summarized in the following theorem, which we prove in Sec.~\ref{sec:lcu}.

\begin{theorem}[Non-sparse Hamiltonian Simulation]
  \label{thm:densehamiltoniansim}
  Let $H\in\mathbb{C}^{N\times N}$ (with $N=2^n$) be a Hermitian matrix stored in the data structure as specified in Definition~\ref{def:datastructure}. There exists a quantum algorithm for simulating the evolution of $H$ for time $t$ and error $\epsilon$ with time complexity (circuit depth)
  \begin{align}
    O\left(t\pnorm{1}{H} n^2\log^{5/2}(t\pnorm{1}{H}/\epsilon)\frac{\log(t\norm{H}/\epsilon)}{\log\log(t\norm{H}/\epsilon)}\right).
  \end{align}
\end{theorem}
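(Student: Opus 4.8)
The plan is to combine three ingredients: (i) the efficient state-preparation primitive from Lemma~\ref{lemma:statepreparation}, which lets us prepare, in circuit depth $\widetilde O(\log N)$, states encoding the rows of $H$; (ii) a quantum walk operator $W$ built from those states, in the style of Berry, Childs and Kothari~\cite{berry2015hamiltonian}, whose spectrum is related to that of $H$ by the usual arcsine correspondence; and (iii) a linear-combination-of-unitaries (LCU) decomposition of $e^{-iHt}$ into a short sum of powers of $W$, which is what buys the poly-logarithmic dependence on $1/\epsilon$. So the first step is to set up the isometry $T$ that maps $\ket{j}$ to a normalized state $\ket{\psi_j}$ whose amplitudes on the ``column'' register are proportional to $\sqrt{H_{jk}}$ (suitably defined, with phases handled by an extra qubit), and to verify that Lemma~\ref{lemma:statepreparation} supplies $T$ and $T^\dagger$ with the claimed $O(n\,\mathrm{polylog})$ cost. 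Then define $W = iS(2TT^\dagger - I)$ (or the appropriate variant) where $S$ is the swap on the two registers, and establish the key spectral lemma: on the invariant two-dimensional subspaces associated to each eigenvalue $\lambda$ of $H/\Lambda$, $W$ acts as a rotation by $\pm\arcsin(\lambda/\Lambda)$, so that $W$ is a block-encoding-type implementation of $\exp(\pm i \arcsin(H/\Lambda))$.

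The second step is the LCU construction. Using the Jacobi–Anger expansion, write $e^{-iHt}$ as a linear combination $\sum_k c_k\, e^{i k \arcsin(H/\Lambda)\cdot(\text{something})}$ — concretely, expand $e^{-i x t \Lambda}$ (as a function of $x=\sin\theta$) into a Fourier/Chebyshev-type series in $\theta$, truncated at order $K = O\!\big(t\Lambda + \frac{\log(1/\epsilon)}{\log\log(1/\epsilon)}\big)$, which controls the error by $\epsilon$. Each term $e^{i k \theta}$ is implemented by $W^k$, so the truncated series becomes $\sum_{|k|\le K} c_k W^k$, a short LCU. Apply the standard LCU lemma (prepare a state with amplitudes $\sqrt{|c_k|}$ on an ancilla of $O(\log K)$ qubits, apply $\sum_k \ket{k}\!\bra{k}\otimes W^k$, unprepare) together with oblivious amplitude amplification to turn the (sub-normalized) LCU into a unitary that is $\epsilon$-close to $e^{-iHt}$; because $\|c\|_1$ is $O(1)$ after the truncation is chosen correctly, only $O(1)$ rounds of amplitude amplification are needed. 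The select operation $\sum_k \ket{k}\!\bra{k}\otimes W^k$ costs $K$ applications of $W$, each of which costs one use of $T$, $T^\dagger$ plus the controlled swap, i.e. $O(n\,\mathrm{polylog})$ depth.

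The third step is to assemble the cost. The total circuit depth is $O(K)$ copies of $W$, i.e. $O\big((t\Lambda + \log(t\|H\|/\epsilon)/\log\log(t\|H\|/\epsilon)) \cdot n\,\mathrm{polylog}(t\Lambda/\epsilon)\big)$, and multiplying out the polylog factors from state preparation, the LCU ancilla arithmetic, and the amplitude-amplification overhead gives exactly the stated bound $O\!\big(t\Lambda\, n\,\log^{5/2}(t\Lambda/\epsilon)\,\tfrac{\log(t\|H\|/\epsilon)}{\log\log(t\|H\|/\epsilon)}\big)$. A subtlety worth spelling out: the walk encodes $H/\Lambda$ rather than $H$ (the normalization $\Lambda=\max\{\|H\|_1,\|H\|\}$ arises because the state $\ket{\psi_j}$ has norm bounded in terms of the row-$1$-norm), so the effective evolution time in the series is $t\Lambda$, which is why $\Lambda$ and not $\|H\|$ appears in the leading factor. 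One also has to handle the long-time regime $t\Lambda \gg 1$ by splitting the evolution into $r = O(t\Lambda)$ segments of length $t/r$ each simulated to error $\epsilon/r$; the $\log(1/\epsilon) \to \log(r/\epsilon)$ change is absorbed into the polylog, and per-segment errors add up linearly.

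I expect the main obstacle to be the spectral analysis of the walk operator $W$ for a \emph{non-sparse} $H$ — specifically, checking that the states $\ket{\psi_j}$ produced by the data structure of Definition~\ref{def:datastructure} are exactly the right ones to make $\langle\psi_j|\psi_k\rangle$ reproduce $H_{jk}/\Lambda$ (including correct treatment of signs/complex phases of the entries via an ancilla qubit, and the fact that $H$ is Hermitian but its entries need not be real), and that the two-dimensional invariant-subspace decomposition goes through with the right isometry structure rather than the sparse-oracle version used in~\cite{berry2015hamiltonian}. Once that correspondence is nailed down, the LCU/amplitude-amplification machinery and the error bookkeeping are essentially standard, and the cost accounting is routine bookkeeping of the polylog factors.
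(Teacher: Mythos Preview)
Your proposal is essentially the paper's approach: state preparation from Lemma~\ref{lemma:statepreparation}, the walk $W=iS(2TT^\dagger-I)$, the Jacobi--Anger/Bessel LCU of powers of $W$, oblivious amplitude amplification, and segmentation into $O(t\Lambda)$ pieces. One point to tighten: the claim that ``$\|c\|_1$ is $O(1)$ after the truncation is chosen correctly'' is not a consequence of truncating at $K=O(t\Lambda+\cdots)$; for $z$ of order $t\Lambda$ the Bessel $\ell_1$-mass $\sum_m|J_m(z)|$ is not $O(1)$, so a single-shot LCU would not admit $O(1)$ rounds of amplification. The paper (and your final paragraph) fixes this by segmenting first, taking $z=-1/2$ per segment so that $\sum_m|\alpha_m|<2$, and then choosing $k=O\bigl(\log(t\|H\|/\epsilon)/\log\log(t\|H\|/\epsilon)\bigr)$ per segment; this is what yields the multiplicative $t\Lambda\cdot(\log/\log\log)$ in the stated bound rather than the additive form you wrote first. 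With that correction your outline matches the paper's proof exactly, and your anticipated ``main obstacle'' (that $\langle\varphi_{j0}|S|\varphi_{k0}\rangle=H_{jk}/\Lambda$ with the right phase conventions) is handled in the paper by the square-root convention in the proof of Lemma~\ref{lemma:statepreparation}.
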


In this work, we use the notation $\pnorm{1}{\cdot}$ to denote the induced 1-norm (i.e.,~maximum absolute row-sum norm), defined as $\pnorm{1}{H} = \max_j\sum_{k=0}^{N-1}|H_{jk}|$; we use $\norm{\cdot}$ to denote the spectral norm, and use $\maxnorm{\cdot}$ to denote the max norm, defined as $\maxnorm{H} = \max_{i,j}|H_{jk}|$.

By the fact that $\pnorm{1}{H}\leq\sqrt{N}\norm{H}$ (see~\cite{CK10}, alternatively, a more generalized version of this relation is shown in Appendix~\ref{appendix:norms}), we immediately have the following corollary.
\begin{corollary}
  Let $H\in\mathbb{C}^{N\times N}$ (where $N=2^n$) be a Hermitian matrix stored in the data structure as specified in Definition.~\ref{def:datastructure}. There exists a quantum algorithm for simulating the evolution of $H$ for time $t$ and error $\epsilon$ with time complexity (circuit depth)
  \begin{align}
	O\left(t\sqrt{N}\norm{H}\, n^2\log^{5/2}(t\sqrt{N}\norm{H}/\epsilon)\frac{\log(t\norm{H}/\epsilon)}{\log\log(t\norm{H}/\epsilon)}\right).
  \end{align}
\end{corollary}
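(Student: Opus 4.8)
The plan is to derive the Corollary directly from Theorem~\ref{thm:densehamiltoniansim} by bounding the parameter $\Lambda=\max\{\pnorm{1}{H},\norm{H}\}$ in terms of the dimension $N$ alone. First I would invoke the matrix-norm inequality $\pnorm{1}{H}\le\sqrt{N}\,\norm{H}$, valid for every $H\in\CC^{N\times N}$; this is the bound attributed to \cite{CK10}, with a more general version proved in Appendix~\ref{appendix:norms}. Since $N\ge 1$ also gives $\norm{H}\le\sqrt{N}\,\norm{H}$, it follows that
\begin{align}
  \Lambda=\max\{\pnorm{1}{H},\norm{H}\}\le\sqrt{N}\,\norm{H}.
\end{align}

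Next I would check that the circuit-depth bound of Theorem~\ref{thm:densehamiltoniansim} is monotonically non-decreasing in $\Lambda$ when the remaining parameters $t$, $n$, $\epsilon$, $\norm{H}$ are held fixed, under the usual convention that the logarithmic factors are at least a positive constant. Indeed, the prefactor $t\Lambda$ and the factor $\log^{5/2}(t\Lambda/\epsilon)$ are both non-decreasing in $\Lambda$, while the factor $n$ and the factor $\log(t\norm{H}/\epsilon)/\log\log(t\norm{H}/\epsilon)$ do not involve $\Lambda$ at all. Therefore substituting the upper bound $\Lambda\le\sqrt{N}\,\norm{H}$ into the expression of Theorem~\ref{thm:densehamiltoniansim} can only enlarge it, which produces exactly
\begin{align}
  O\left(t\sqrt{N}\norm{H}\, n\log^{5/2}(t\sqrt{N}\norm{H}/\epsilon)\frac{\log(t\norm{H}/\epsilon)}{\log\log(t\norm{H}/\epsilon)}\right),
\end{align}
as claimed.

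I do not anticipate a genuine obstacle here: the Corollary is simply the specialization of the main theorem to the crude, dimension-only estimate $\pnorm{1}{H}\le\sqrt{N}\,\norm{H}$, and its only content is to display the advertised $\widetilde{O}(\sqrt{N})$ scaling in a form directly comparable with the sparse-Hamiltonian literature. The one point worth spelling out is the monotonicity of the complexity expression in $\Lambda$, which is handled by the factor-by-factor inspection above.
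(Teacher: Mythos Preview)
Your proposal is correct and follows exactly the same approach as the paper: the Corollary is obtained immediately from Theorem~\ref{thm:densehamiltoniansim} via the inequality $\pnorm{1}{H}\le\sqrt{N}\,\norm{H}$, and your added remark on the monotonicity of the bound in $\Lambda$ only makes the substitution step more explicit than the paper's one-line justification.
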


\noindent
\textbf{Remarks:} 
\begin{enumerate}
  \item In Theorem~\ref{thm:densehamiltoniansim}, the circuit depth scales as $\widetilde{O}(\sqrt{N})$. However, the gate complexity could be as large as $O(N^{2.5}\log^2(N))$ in general because of the addressing scheme which allows for \textit{quantum access to classical data} stored in the data structure as specified in Definition~\ref{def:datastructure}. If some structure of $H$ is promised (e.g., the entries of $H$ repeat in some pattern), the addressing scheme could be implemented efficiently.
  \item If $H$ is $d$-sparse (i.e., $H$ has at most $d$ non-zero entries in each row), the time complexity (circuit depth) of our algorithm is
    \begin{align}
    	O\left(t\sqrt{d}\norm{H}\, n^2\log^{5/2}(t\sqrt{d}\norm{H}/\epsilon)\frac{\log(t\norm{H}/\epsilon)}{\log\log(t\norm{H}/\epsilon)}\right).
    \end{align}
    This follows from Theorem~\ref{thm:densehamiltoniansim} and the fact that $\pnorm{1}{H} \leq \sqrt{d}\norm{H}$ (as shown in Appendix~\ref{appendix:norms}).
  \item The techniques of our algorithm also work for the sparse-access model where we use standard state preparation techniques to prepare the state as in Eq.~\eqref{eq:mapping-ds}. Compared to the qRAM model, the time complexity of the state preparation for Eq.~\eqref{eq:mapping-ds} in the sparse-access model incurs an additional $O(d)$ factor (for computing $\sigma_j$). Now, to simulate $H$ for time $t$ in the sparse-access model, the dependence of the time complexity on $t$, $d$, and $\norm{H}$ becomes $O(td^{1.5}\norm{H})$ as $\norm{H}_1 \leq \sqrt{d}\norm{H}$. Hence our techniques have no advantage over previous results in the sparse-access model.
\end{enumerate}

\subsection{Related work}
\label{ssec:related}

\paragraph{Hamiltonian simulation with $\norm{H}$ dependence.}
For non-sparse Hamiltonians, a suitable model is the black-box model (a special case of the sparse-access model): querying the oracle with an index-pair $\ket{i,j}$ returns the corresponding entry of the Hamiltonian $H$, i.e., $O_H$ defined in Eq.~\eqref{eq:oh}.
With access to a black-box Hamiltonian, simulation with error $\epsilon$ can then be provably performed with query complexity 
$O((\norm{H}t)^{3/2}N^{3/4}/\sqrt{\epsilon})$ for dense Hamiltonians~\cite{berry2009black},
and it was also empirically observed in~\cite{berry2009black} that for several classes of Hamiltonians, $O(\sqrt{N}\log (N))$ queries suffice.
Whether this $\widetilde{O}(\sqrt{N})$ dependence holds for all Hamiltonians was left as an open problem. After the first version of this work was made public, this open problem was almost resolved by Low~\cite{low2018hamiltonian}, where he proposed a quantum algorithm for simulating black-box Hamiltonians with time complexity $O((t\sqrt{N}\norm{H})^{1+o(1)}/\epsilon^{o(1)})$. Although our input model is stronger than the black-box model, our work distinguishes itself since the two models are still comparable in many quantum machine learning applications and our work gives a better complexity.

\paragraph{Hamiltonian simulation with $\maxnorm{H}$ dependence.} 
Since the qRAM model is stronger than the black-box model and the sparse-access model, previous quantum algorithms such as~\cite{berry2009black,berry2015hamiltonian} can be directly used to simulate Hamiltonians in the qRAM model. For a $d$-sparse Hamiltonian, the circuit depth of the black-box Hamiltonian simulation is given by $\widetilde{O}(td\maxnorm{H})$ as shown in~\cite{berry2009black, berry2015hamiltonian}. When $H$ is non-sparse, their results imply the scaling $\widetilde{O}(tN\maxnorm{H})$. In applications where $\maxnorm{H}$ is a measure of cost, our result has no advantage against theirs, as the inequality $\norm{H}\leq\sqrt{N}\maxnorm{H}$ implies $\widetilde{O}(t\sqrt{H}\norm{H}) = \widetilde{O}(tN\maxnorm{H})$. However, in the case where $\norm{H}$ is a measure of cost, such as solving linear systems~\cite{harrow2009quantum,childs2017quantum}, and black-box unitary implementation~\cite{berry2009black}, our result has a quadratic improvement in the dimensionality dependence, as the inequality $\maxnorm{H}\leq\norm{H}$ implies $\widetilde{O}(tN\maxnorm{H}) = \widetilde{O}(tN\norm{H})$.

\paragraph{Hamiltonian simulation in the qRAM model.}
Shortly after the first version of this paper was made public, Chakraborty, Gily{\'e}n, and Jeffery~\cite{CGJ19} independently proposed a quantum algorithm for simulating non-sparse Hamiltonians based on the similar qRAM input model which achieved the same time complexity as our method. Their work is based on a very general input model, namely, the block-encoding model, which was originally proposed in~\cite{low2016hamiltonian}. The block-encoding model assumes we are given a unitary $\bigl(\begin{smallmatrix}H/\alpha & \cdot\\\cdot&\cdot\end{smallmatrix}\bigr)$ that contains $H/\alpha$ in its upper-left block. Then the evolution $e^{-iHt}$ can be simulated in $\widetilde{O}(\alpha\norm{H}t)$ time. It was shown in~\cite{low2016hamiltonian} that working with the sparse-access model of a $d$-sparse Hamiltonian $H$, a block-encoding of $H$ with $\alpha=d$ can be efficiently implemented, and it hence implies a simulation algorithm with time complexity $\widetilde{O}(d\norm{H}t)$. One of the main results in~\cite{CGJ19} is that working with the qRAM model of a $d$-sparse Hamiltonian $H$, a block-encoding with $\alpha = \sqrt{d}$ can be efficiently implemented, which yields a simulation algorithm with time complexity $\widetilde{O}(\sqrt{d}\norm{H}t)$. This result has the same complexity as ours, but their techniques are more general. In fact, the techniques of~\cite{CGJ19} has been generalized in~\cite{GSLW19} to a quantum framework for implementing singular value transformation of matrices. Another merit of~\cite{CGJ19} is that they gave detailed analysis for application to a quantum linear systems solvers in the qRAM model.

\paragraph{Quantum-inspired classical algorithms for Hamiltonian simulation.}
Recently (after the first version of this paper was made public), several classical algorithms~\cite{RWC+18,CGL+19} have been proposed for simulating Hamiltonians in a classical input model, namely, the \emph{sampling and query access}, which is comparable to the qRAM model we use. The algorithm in~\cite{RWC+18} is efficient in the low-rank and sparse regime, and the algorithm in~\cite{CGL+19} is efficient when $H$ is low-rank: their time complexity scales as $\poly(t, \norm{H}_F, 1/\epsilon)$, where $\norm{H}_F$ is the Frobenius norm of $H$. While these results have ruled out the possibility of exponential speedups of our algorithm in the low-rank regime, noting that the degrees in the polynomials of the time complexity of these classical algorithms are large, our work still has a polynomial speedup over classical algorithm (for low-rank Hamiltonians and dense Hamiltonians).

To summarize this subsection, we provide Table~\ref{tab:relatedwork} for state-of-the-art algorithms for Hamiltonian simulation in different models and how our techniques compare with theirs in the qRAM model.
\begin{table}[htbp]
  \centering
  \begin{tabular}{|l|r|r|}
\hline
\parbox[c][3em]{3.3cm}{Model} & State-of-the-art & \parbox[c][3em]{4.3cm}{Advantage of our results} \\ \hline
\parbox[c][3em]{3.3cm}{Sparse-access with \\ $\maxnorm{H}$ dependence} & $\widetilde{O}(td\maxnorm{H})$~\cite{berry2015hamiltonian} & No advantage \\ \hline
\parbox[c][3em]{3.3cm}{Sparse-access with \\ $\norm{H}$ dependence} & $O((t\sqrt{d}\norm{H})^{1+o(1)}/\epsilon^{o(1)})$~\cite{low2018hamiltonian} & \parbox[c][3em]{5.6cm}{Subpolynomial improvement in $t, d$; exponential improvement in $\epsilon$} \\ \hline
\parbox[c][2.5em]{3.3cm}{qRAM} & $\widetilde{O}(t\sqrt{d}\norm{H})$~\cite{CGJ19} & Same result \\ \hline
\parbox[c][3em]{3.3cm}{Classical sampling \\ and query access} & $\poly(t, \norm{H}_F, 1/\epsilon)$~\cite{CGL+19} & Polynomial speedup \\ \hline
\end{tabular}
\caption{Comparing our result $O(t\sqrt{d}\norm{H}\,\polylog(t,d,\norm{H}, 1/\epsilon))$ with other quantum and classical algorithms for different models. Since the qRAM model is stronger than the sparse-access model and the classical sampling and query access model, we consider the advantage of our algorithm against others when they are directly applied to the qRAM model.}
\label{tab:relatedwork}
\end{table}

\subsection{Applications}
\label{ssec:applications}
\paragraph{Unitary implementation.} One immediate application of simulating non-sparse Hamiltonians is the unitary implementation problem: given access to the entries of a unitary $U$, the objective is to construct a quantum circuit to approximate $U$ with precision $\epsilon$. As proposed in~\cite{berry2009black,jordan2009efficient}, unitary implementation can be reduced to Hamiltonian simulation by considering the Hamiltonian
\begin{align}
  H = \begin{pmatrix}
    0 & U\\
    U^{\dag} & 0
  \end{pmatrix},
\end{align}
and the fact that $e^{-iH\pi/2}\ket{1}\ket{\psi} = -i\ket{0}U\ket{\psi}$. 
If the entries of a unitary matrix can be accessed by a black-box query oracle, it is shown in~\cite{berry2009black} that this unitary operator can be implemented with $O(N^{2/3}\polylog(N)\poly(1/\epsilon))$ queries to the black box. 
Assume now that the entries of $U$ are stored in a data structure such as in Definition~\ref{def:datastructure}. 
Using the same reduction to Hamiltonian simulation, our Hamiltonian simulating algorithm implies an implementation of $U$ with time complexity (circuit depth) $O(\sqrt{N}\,\polylog(N,1/\epsilon))$.

\paragraph{Quantum linear systems solver.} Simulating non-sparse Hamiltonians can also be used as a subroutine for solving linear systems of equations for non-sparse matrices.
The essence for solving a linear system $A\ket{x}=\ket{b}$ is to apply $A^{-1}$ on $\ket{b}$, assuming here for simplicity that $\ket{b}$ is entirely in the column-space of $A$.
When $A$ is $d$-sparse, it is shown in~\cite{childs2017quantum} that $A^{-1}$ can be approximated as a linear combination of unitaries of the form $e^{-iAt}$.
An efficient quantum algorithm for Hamiltonian simulation such as~\cite{berry2015hamiltonian} can then be used as a subroutine, so that this linear system can be solved with gate complexity $O(d\kappa^2\polylog(N, \kappa/\epsilon))$.
If $A$ is non-sparse, the time complexity of their algorithm scales as $\widetilde{O}(N)$. Based on the data structure in~\cite{kerenidis2016quantum}, a quantum algorithm for solving linear systems for non-sparse matrices was described in~\cite{wossnig2018quantum}, with time complexity (circuit depth) $O(\kappa^2\sqrt{N}\polylog(N)/\epsilon)$.
If we assume a similar input model as in~\cite{kerenidis2016quantum,wossnig2018quantum}, using our Hamiltonian simulation algorithm, together with the linear combinations of unitaries (LCU) decompositions in~\cite{childs2017quantum}, we have a quantum algorithm for solving linear systems for non-sparse matrices with time complexity (circuit depth) $O(\kappa^2\sqrt{N}\polylog(\kappa/\epsilon))$, which is an exponential improvement in error dependence compared to~\cite{wossnig2018quantum}.

In the remainder of this paper, we first define our data structure, then describe the algorithm in detail and finally prove the main results. We then finish with a summary of this work and a discussion of the benefits and possible drawbacks of our algorithm.

%------------------------------------------------------------------------------%
\section{Data structure and quantum walk}
\label{sec:datastructure}

We first give the precise definition of the data structure that stores the entries of the Hamiltonian, and then show how this data structure can be used to prepare states that will allow us to perform fast Hamiltonian simulation even for dense matrices.
The data structure was introduced in in~\cite{kerenidis2016quantum}.

\begin{definition}[Data Structure]
  \label{def:datastructure}
  Let $H\in\mathbb{C}^{N\times N}$ be a Hermitian matrix (where $N = 2^n$), $\pnorm{1}{H}$ being the maximum absolute row-sum norm, and $\sigma_j:= \sum_k \lvert H_{jk} \rvert$. Each entry $H_{jk}$ is represented with $b$ bits of precision. Define $D$ as an array of $N$ binary trees $D_j$ for $j \in \{0, \ldots, N-1\}$. Each $D_j$ corresponds to the row $H_j$, and its organization is specified by the following rules. 
  \begin{enumerate}
    \item The leaf node $k$ of the tree $D_j$ stores the value\footnote{Note that the conjugation here is necessary. See Eq.~\eqref{eq:Hij}.} ${H_{jk}^{*}}$ corresponding to the index-entry pair $(j,k,H_{jk})$.
	\item For the level immediately above the bottom level, i.e., the leaves, and any node level above the leaves, the data stored is determined as follows: suppose the node has two children storing data $a$, and $b$ respectively (note that $a$ and $b$ are complex numbers). Then the entry that is stored in this node is given by $(|a|+|b|)$.
  \end{enumerate}
\end{definition}
An example of the above data structure is shown Fig.~\ref{fig:bt-4}. Note that for each binary tree $D_j$ in the data structure, the value stored in an internal (non-leaf) node is a real number, while for a leaf node, the value stored is a complex number. The root node of $D_j$ stores the value $\sum_{k=0}^{N-1}|H_{jk}^{*}|$ and we can calculate the value $\pnorm{1}{H}-\sigma_j$ in constant time. In addition, $\pnorm{1}{H}$ can be computed by taking the maximum value of the roots of all the binary trees, which can be done during the construction of the data structure.

\begin{figure}[ht]
    \centering
    \includegraphics[width=0.8\textwidth]{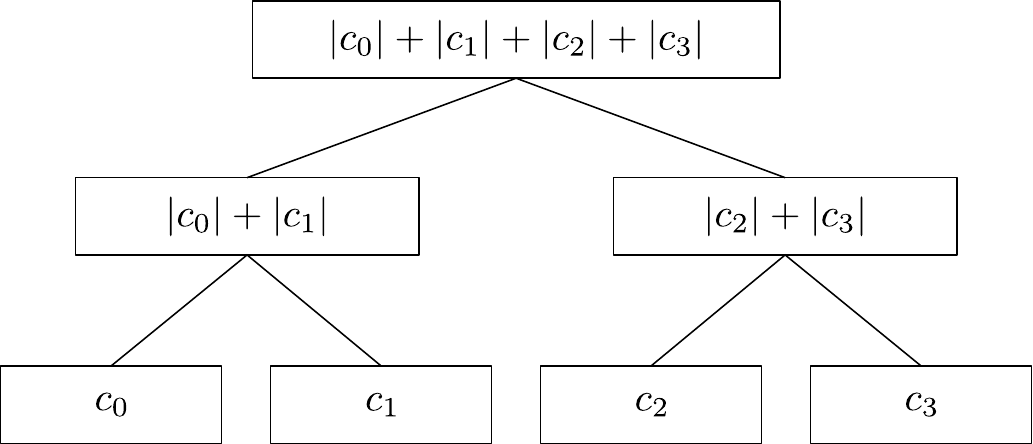}
	\caption{A example of the data structure that allows for efficient state preparation using a logarithmic number of conditional rotations.}
    \label{fig:bt-4}
\end{figure}

Using this data structure, we can efficiently perform the mapping described in the following technical lemma for efficient state preparation.

\begin{lemma}[State Preparation]
  \label{lemma:statepreparation}
  Let $H\in\mathbb{C}^{N\times N}$ be a Hermitian matrix (where $N = 2^n$) stored in the data structure as specified in Definition~\ref{def:datastructure}. Each entry $H_{jk}$ is represented with $b$ bits of precision. Then the following holds
  \begin{enumerate}
	\item Let $\pnorm{1}{H}$ be defined as above. A quantum computer that has access to the data structure can perform the following mapping for $j\in\{0,\ldots,N-1\}$,

	  \begin{align}
        \label{eq:mapping-ds}
        \ket{j}\ket{0^{\log N}}\ket{0} \mapsto \frac{1}{\sqrt{\pnorm{1}{H}}}\ket{j}\sum_{k=0}^{N-1}\ket{k}\left(\sqrt{H_{jk}^{*}}\ket{0} + \sqrt{\frac{\pnorm{1}{H} - \sigma_j}{N}}\ket{1}\right), 	  \end{align}
	  with time complexity (circuit depth) $O(n^2b^{5/2}\log b)$, where $\sigma_j = \sum_k|H_{jk}|$, and the square-root satisfies $\sqrt{H_{jk}}\bigl(\sqrt{H_{jk}^*}\bigr)^* = H_{jk}$.
	\item The size of the data structure containing all $N^2$ complex entries is $O(N^2\log^2(N))$.
  \end{enumerate}
\end{lemma}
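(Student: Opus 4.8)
The plan is to build the target state of Eq.~\eqref{eq:mapping-ds} by a top-down descent through the binary tree $D_j$, controlled on the register $\ket{j}$, exactly as in the state-preparation routine of Kerenidis--Prakash (for recommendation systems) and Berry et al. (for quantum walks), but with the extra bookkeeping needed to carry the $\sqrt{(\Lambda-\sigma_j)/N}\ket{1}$ "slack" component alongside the $\sqrt{H_{jk}^*}\ket{0}$ component. First I would note that the root of $D_j$ stores $\bigl(\sum_k|H_{jk}^*|,\,\Lambda-\sigma_j\bigr) = (\sigma_j,\,\Lambda-\sigma_j)$, whose two coordinates sum to $\Lambda$; more generally, by the summation rules (items~2 and~3 of Definition~\ref{def:datastructure}), the tuple $(u,v)$ stored at any internal node $\nu$ satisfies $u+v = $ (the contribution of the subtree under $\nu$ to $\Lambda$), and at a leaf $k$ the pair $(H_{jk}^*,\,(\Lambda-\sigma_j)/N)$ has $|H_{jk}^*| + (\Lambda-\sigma_j)/N$ equal to that leaf's contribution. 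This is the invariant that makes the conditional rotations well-defined.

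Next I would describe one step of the descent. Suppose we have already prepared, on a path register of length $\ell$, the superposition $\tfrac{1}{\sqrt\Lambda}\sum_{\nu}\sqrt{w_\nu}\,\ket{\nu}$ over nodes $\nu$ at level $\ell$, where $w_\nu$ is $\nu$'s subtree contribution to $\Lambda$. Using quantum access to the data structure, we query the two children of $\nu$, read their stored first-plus-second-coordinate sums $w_{\nu 0}, w_{\nu 1}$ (with $w_{\nu 0}+w_{\nu 1}=w_\nu$), and apply a controlled rotation on a fresh qubit taking $\ket{0}\mapsto \sqrt{w_{\nu 0}/w_\nu}\,\ket{0} + \sqrt{w_{\nu 1}/w_\nu}\,\ket{1}$; this extends the path register to level $\ell+1$ with the correct amplitudes. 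Iterating $n = \log N$ times lands us on the leaves with amplitude $\tfrac{1}{\sqrt\Lambda}\sum_k \sqrt{|H_{jk}^*| + (\Lambda-\sigma_j)/N}\,\ket{k}$. One final controlled rotation on the ancilla qubit, conditioned on the leaf data $\bigl(H_{jk}^*, (\Lambda-\sigma_j)/N\bigr)$, splits this into $\sqrt{H_{jk}^*}\ket{0} + \sqrt{(\Lambda-\sigma_j)/N}\ket{1}$ — here one must be careful about the phase of the complex square root, which is why the lemma only asks for a branch satisfying $\sqrt{H_{jk}}(\sqrt{H_{jk}^*})^* = H_{jk}$; any consistent choice of branch computed reversibly from the stored $b$-bit value works, and the arctangent/square-root needed for each rotation angle is computed by a reversible arithmetic circuit acting on $b$-bit numbers.

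For the complexity, each of the $n+1$ levels requires: a data-structure lookup (addressable in depth $O(\log N) = O(n)$, but this cost is the addressing scheme flagged in the Remarks and is folded into the model), plus the reversible computation of a rotation angle from $b$-bit inputs and a single controlled $Y$-rotation. Computing $\sqrt{\cdot}$ and $\arctan(\sqrt{\cdot})$ on $b$-bit fixed-point numbers to $b$ bits of accuracy takes circuit depth $O(b^{5/2})$ by standard reversible-arithmetic constructions (e.g.\ Newton iteration with schoolbook multiplication, or the bounds quoted in the references on quantum arithmetic); doing this $O(n)$ times gives the claimed $O(nb^{5/2})$ depth. Part~2 is immediate: there are $N$ trees, each with $O(N)$ nodes, and each node stores $O(\log N)$-bit numbers — wait, more precisely each of the $N^2$ entries needs $b = O(\log N)$ bits (since amplitudes are specified to enough precision to control $\epsilon$) and each tree has $2N-1$ nodes, so the total is $O(N^2 \log^2 N)$ as stated. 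The main obstacle is the third step — the careful, reversible treatment of the complex square root at the leaves so that the global phases assemble into exactly the state of Eq.~\eqref{eq:mapping-ds} rather than that state with uncontrolled per-$k$ phases — together with verifying that the slack terms $(\Lambda-\sigma_j)/N$, which are what guarantee $\sigma_j \le \Lambda$ and hence that all the rotation arguments are nonnegative, propagate correctly up the tree under the summation rules of Definition~\ref{def:datastructure}.
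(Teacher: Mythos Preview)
Your proposal is correct and follows essentially the same approach as the paper: a top-down descent through the binary tree $D_j$ with one conditional rotation per level, $O(b^{5/2})$ reversible arithmetic per level for square roots and angle computations, and $n$ levels giving $O(nb^{5/2})$ depth. The only cosmetic difference is ordering --- the paper rotates the ancilla \emph{first} at the root into $\sqrt{\sigma_j}\ket{0}+\sqrt{\Lambda-\sigma_j}\ket{1}$ and then descends with rotations conditioned on both the ancilla and the partial path (using the first or second tuple coordinate accordingly), whereas you descend first on the combined weight $u_\nu+v_\nu$ and split the ancilla last at the leaves; both orderings produce the same state with the same depth, and the paper handles the complex-square-root branch exactly as you anticipate, fixing the convention $\sqrt{H_{jk}^*}=\sqrt{r}\,e^{-i\varphi/2}$ for $H_{jk}=re^{i\varphi}$.
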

In order to perform the mapping we will need the following Lemma, which enables us to efficiently implement the conditional rotations with complex numbers.
\begin{lemma}
  \label{lemma:conditional-rotation}
  Let $\theta, \phi_0, \phi_1 \in \mathbb{R}$ and let $\widetilde{\theta}, \widetilde{\phi_0}, \widetilde{\phi_1}$ be the $b$-bit finite precision representation of $\theta, \phi_0$, and $\phi_1$, respectively. Then there exists a unitary $U$ that performs the following mapping:
  \begin{align}
	U:\ket{\widetilde{\phi_0}}\ket{\widetilde{\phi_1}}\ket{\widetilde{\theta}}\ket{0} \mapsto \ket{\widetilde{\phi_0}}\ket{\widetilde{\phi_1}}\ket{\widetilde{\theta}}\left(e^{i\widetilde{\phi_0}}\cos(\widetilde{\theta})\ket{0} + e^{i\widetilde{\phi_1}}\sin(\widetilde{\theta})\ket{1}\right).
  \end{align}
  Moreover, $U$ can be implemented with $O(b)$ 1- and 2-qubit gates.
\end{lemma}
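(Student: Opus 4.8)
The plan is to build $U$ by hand as a cascade of $O(b)$ controlled one-qubit gates, treating the amplitude (the $\cos/\sin$ part) and the two phases $e^{i\widetilde{\phi_0}},e^{i\widetilde{\phi_1}}$ in three separate stages, using one elementary gate per bit of each control register. First I would fix the fixed-point binary convention so that the register $\ket{\widetilde{\theta}}$ holds bits $\theta_1,\dots,\theta_b\in\{0,1\}$ with $\widetilde{\theta}=\sum_{\ell=1}^{b}\theta_\ell 2^{-\ell}$ (any other affine encoding merely rescales/shifts the rotation angles below). Writing $R_y(\alpha)$ for the one-qubit rotation with $R_y(\alpha)\ket{0}=\cos(\alpha/2)\ket{0}+\sin(\alpha/2)\ket{1}$ and group law $R_y(\alpha)R_y(\beta)=R_y(\alpha+\beta)$, I apply, for each $\ell$, the two-qubit gate ``$R_y(2^{1-\ell})$ on the target, controlled by the $\ell$-th qubit of $\ket{\widetilde{\theta}}$''. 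Composing these $b$ gates sends the target from $\ket{0}$ to $\cos(\widetilde{\theta})\ket{0}+\sin(\widetilde{\theta})\ket{1}$ and leaves every control register fixed.

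For the phases, let $\mathrm{CP}(\alpha)$ be the two-qubit controlled-phase gate multiplying by $e^{i\alpha}$ exactly when both its qubits are $\ket{1}$. Applying $\mathrm{CP}(2^{-\ell})$ between the $\ell$-th qubit of $\ket{\widetilde{\phi_1}}$ and the target, for $\ell=1,\dots,b$, multiplies the $\ket{1}$-component of the target by $e^{i\widetilde{\phi_1}}$ and does nothing to its $\ket{0}$-component ($b$ more gates). To attach $e^{i\widetilde{\phi_0}}$ to the $\ket{0}$-component instead, I conjugate the analogous cascade, now controlled by the bits of $\ket{\widetilde{\phi_0}}$, by a Pauli $X$ on the target: apply $X$, then the $b$ controlled-phase gates, then $X$ again. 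Carrying out the stages in the order (i) controlled-$R_y$ cascade, (ii) $\widetilde{\phi_1}$-phase cascade, (iii) $X$, $\widetilde{\phi_0}$-phase cascade, $X$, and tracking the two target amplitudes: stage (i) produces $\cos(\widetilde{\theta})\ket{0}+\sin(\widetilde{\theta})\ket{1}$, stage (ii) produces $\cos(\widetilde{\theta})\ket{0}+e^{i\widetilde{\phi_1}}\sin(\widetilde{\theta})\ket{1}$, and stage (iii) produces $e^{i\widetilde{\phi_0}}\cos(\widetilde{\theta})\ket{0}+e^{i\widetilde{\phi_1}}\sin(\widetilde{\theta})\ket{1}$, exactly the state in the statement, with all three control registers returned unchanged.

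Finally, each stage uses $b$ two-qubit gates plus at most two single-qubit $X$'s, so $U$ is implemented with $3b+O(1)=O(b)$ one- and two-qubit gates; since the circuit acts identically for every computational-basis value of the three control registers it defines a unitary on the whole space, and by linearity the displayed mapping holds. If one insists on a fixed universal gate set rather than counting a controlled one-qubit gate as elementary, each controlled-$R_y$ and each controlled phase decomposes into $O(1)$ standard gates (two-level/``ABC'' decomposition), which does not change the $O(b)$ bound. I do not expect a genuine obstacle here: this is essentially a textbook controlled-rotation synthesis, and the only real content of the lemma is that the $b$-bit truncation lets the rotation be realized \emph{exactly} with $O(b)$ gates; the one thing to be careful about is the bookkeeping of the binary encoding, so that the per-bit angles $2^{1-\ell}$ and $2^{-\ell}$ reconstruct $\widetilde{\theta},\widetilde{\phi_0},\widetilde{\phi_1}$ correctly.
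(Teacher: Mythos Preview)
Your proof is correct and follows essentially the same approach as the paper: decompose $U$ into three controlled single-qubit operations---a $Y$-rotation controlled on $\ket{\widetilde{\theta}}$ and two diagonal phase operators $e^{i\ketbra{0}{0}\widetilde{\phi_0}}$, $e^{i\ketbra{1}{1}\widetilde{\phi_1}}$ controlled on $\ket{\widetilde{\phi_0}},\ket{\widetilde{\phi_1}}$---and implement each bit-by-bit with one controlled single-qubit rotation per control qubit, halving the angle at each successive bit. Your write-up is simply more explicit about the gate-level details (the $X$-conjugation to move the controlled phase from $\ket{1}$ to $\ket{0}$, the exact per-bit angles), but the construction and the $O(b)$ count are the same.
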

\begin{proof}
  Define $U$ as
  \begin{align}
	U = \left(\sum_{\widetilde{\phi_0}\in\{0,1\}^b}\ketbra{\widetilde{\phi_0}}{\widetilde{\phi_0}}\otimes e^{i\ketbra{0}{0}\widetilde{\phi_0}}\right)
	\left(\sum_{\widetilde{\phi_1}\in\{0,1\}^b}\ketbra{\widetilde{\phi_1}}{\widetilde{\phi_1}}\otimes e^{i\ketbra{1}{1}\widetilde{\phi_1}}\right)
	\left(\sum_{\widetilde{\theta}\in\{0,1\}^b}\ketbra{\widetilde{\theta}}{\widetilde{\theta}}\otimes e^{-iY\widetilde{\theta}}\right),
  \end{align}
  where $Y = \bigl(\begin{smallmatrix}0&-i\\i&0\end{smallmatrix}\bigr)$ is the Pauli $Y$ matrix.

  To implement the operator $\sum_{\widetilde{\theta}\in\{0,1\}^b}\ketbra{\widetilde{\theta}}{\widetilde{\theta}}\otimes e^{-iY\widetilde{\theta}}$, we use one rotation controlled on each qubit of the first register, with the rotation angles halved for each successive bit. The other two factors of $U$ can be implemented in a similar way. Therefore, $U$ can be implemented with $O(b)$ 1- and 2-qubit gates.
\end{proof}

Before proving Lemma~\ref{lemma:statepreparation}, we first describe the construction and the size of the data structure. Readers may refer to~\cite{kerenidis2016quantum} for more details of this data structure.
  \begin{itemize}
  \item The data structure is built from $N$ binary trees $D_i, i \in \{0, \dots, N-1\}$ and we start with an empty tree.
\item When a new entry $(i,j,H_{ij})$ arrives, we create or update the leaf node $j$ in the tree $D_i$, where the adding of the entry takes $ O(\log(N))$ time, since the depth of the tree for $H \in  \mathbb C^{N \times N}$ is at most $\log(N)$. Since the path from the root to the leaf is of length at most $\log(N)$ (under the assumption that $N=2^n$), we have furthermore to update at most $\log(N)$ nodes, which can be done in $O(\log(N))$ time if we store an ordered list of the levels in the tree. 
\item The total time for updating the tree with a new entry is given by $\log(N)\times\log(N)= \log^2(N)$.
\item The memory requirements for $k$ entries are given by $O(k \log^2(N))$ as for every entry $(j,k,H_{jk})$ at least $\log(N)$ nodes are added and each node requires at most $O(\log(N))$ bits. 
\end{itemize}

Now we are ready to prove Lemma~\ref{lemma:statepreparation}.
\begin{proof}[Proof of Lemma~\ref{lemma:statepreparation}]
With this data structure, we can perform the mapping specified in Eq.~\eqref{eq:mapping-ds}, with the following steps. For each $j$, we start from the root of $D_j$. Starting with the initial state $\ket{j}\ket{0^{\log N}}\ket{0}$, first apply the rotation (according to the value stored in the root node and calculating the normalisation in one query) on the last register to obtain the state
  \begin{align}
    \frac{1}{\sqrt{\pnorm{1}{H}}}\ket{0}\ket{0^{\log N}}\left(\sqrt{\sum_{k=0}^{N-1}|H_{jk}^{*}|}\ket{0} + \sqrt{\pnorm{1}{H} - \sigma_j}\ket{1}\right).
  \end{align}
  Then a sequence of conditional rotations is applied on each qubit of the second register to obtain the state as in Eq.~\eqref{eq:mapping-ds}. At level $\ell$ of the binary tree $D_j$, a query to the data structure is made to load the data $c$ (stored in the node) into a register in superposition, the rotation to perform is proportional to $\bigl(\sqrt{c}, \sqrt{(\pnorm{1}{H}-\sigma_j)/2^\ell}\bigr)$ (assuming at the root, $\ell = 0$, and for the leaves, $\ell = \log N$). Then the rotation angles will be determined by calculating the square root and trigonometric functions on the output of the query: this can be implemented with $O(b^{5/2})$ 1- and 2-qubit gates using simple techniques based on Taylor series and long multiplication as in~\cite{berry2015hamiltonian}, where the error is smaller than that caused by truncating to $b$ bits. Then the conditional rotation is applied by the circuit described in Lemma~\ref{lemma:conditional-rotation}, and the cost for the conditional rotation is $O(b)$. There are $n=\log(N)$ levels, so the cost excluding the implementation of the oracle is $O(nb^{5/2})$. To obtain quantum access to the classical data structure, a quantum addressing scheme is required. One addressing scheme described in~\cite{giovannetti2008quantum} can be used. Although the circuit size of this addressing scheme is $\widetilde{O}(N)$ for each $D_j$, its circuit depth is $O(n)$. Therefore, the time complexity (circuit depth) for preparing the state in Eq.~\eqref{eq:mapping-ds} is $O(n^2b^{5/2}\log n)$.
  
  We use the following rules to determine the sign of the square-root of a complex number: if $H_{jk}$ is not a negative real number, we write $H_{jk} = re^{i\varphi}$ (for $r\geq 0$ and $-\pi\leq\varphi\leq\pi$) and take $\sqrt{H_{jk}^*} = \sqrt{r}e^{-i\varphi/2}$; when $H_{jk}$ is a negative real number, we take $\sqrt{H_{jk}^*} = \mathrm{sign}(j-k)i\sqrt{|H_{jk}|}$ to avoid the sign ambiguity. With this recipe, we have $\sqrt{H_{jk}}\bigl(\sqrt{H_{jk}^*}\bigr)^* = H_{jk}$.
\end{proof}

In the following, we demonstration a state preparation procedure based on the data structure in Fig.~\ref{fig:bt-4}.
In this example, the initial state (omitting the first register) is $\ket{00}\ket{0}$. Let $\sigma_j = |c_0|+|c_1|+|c_2|+|c_3|$. Apply the first rotation, we obtain the state
\begin{align}
  \frac{1}{\sqrt{\pnorm{1}{H}}}\ket{00}\left(\sqrt{|c_0|+|c_1|+|c_2|+|c_3|}\ket{0} + \sqrt{\pnorm{1}{H}-\sigma_j}\ket{1}\right) = \nonumber \\
  \frac{1}{\sqrt{\pnorm{1}{H}}}\left(\sqrt{|c_0|+|c_1|+|c_2|+|c_3|}\ket{00}\ket{0} + \sqrt{\pnorm{1}{H}-\sigma_j}\ket{00}\ket{1}\right).
\end{align}
Then, apply a rotation on the first qubit of the first register conditioned on the last register, we obtain the state
\begin{align}
  \frac{1}{\sqrt{\pnorm{1}{H}}}\left(\left(\sqrt{|c_0|+|c_1|}\ket{00}+\sqrt{|c_2|+|c_3|}\ket{10}\right)\ket{0} + \left(\sqrt{\frac{\pnorm{1}{H}-\sigma_j}{2}}(\ket{00} + \ket{10}) \right) \ket{1}\right).
\end{align}
Next, apply a rotation on the second qubit of the first register conditioned on the first qubit of the first register and last register, we obtain the desired state:
\begin{align}
  \frac{1}{\sqrt{\pnorm{1}{H}}}\left(\sqrt{c_0}\ket{00}\ket{0}+\sqrt{c_1}\ket{01}\ket{0} + \sqrt{c_2}\ket{10}\ket{0} + \sqrt{c_3}\ket{11}\ket{0} + \right. \nonumber \\
  \left. \sqrt{\frac{\pnorm{1}{H}-\sigma_j}{4}}\ket{00}\ket{1} +\sqrt{\frac{\pnorm{1}{H}-\sigma_j}{4}}\ket{01}\ket{1} + \sqrt{\frac{\pnorm{1}{H}-\sigma_j}{4}}\ket{10}\ket{1} + \sqrt{\frac{\pnorm{1}{H}-\sigma_j}{4}}\ket{11}\ket{1}\right) = \nonumber \\
  \frac{1}{\sqrt{\pnorm{1}{H}}}\left(\ket{00}\left(\sqrt{c_0}\ket{0} + \sqrt{\frac{\pnorm{1}{H}-\sigma_j}{4}}\ket{1}\right) + \ket{01}\left(\sqrt{c_1}\ket{0} + \sqrt{\frac{\pnorm{1}{H}-\sigma_j}{4}}\ket{1}\right) + \right. \nonumber \\
  \left. \ket{10}\left(\sqrt{c_2}\ket{0} + \sqrt{\frac{\pnorm{1}{H}-\sigma_j}{4}}\ket{1}\right) + \ket{11}\left(\sqrt{c_3}\ket{0} + \sqrt{\frac{\pnorm{1}{H}-\sigma_j}{4}}\ket{1}\right)\right).
\end{align}

Based on the data structure specified in Definition~\ref{def:datastructure} and the efficient state preparation in Lemma~\ref{lemma:statepreparation}, we construct a quantum walk operator for $H$ as follows. First define the isometry $T$ as
  \begin{align}
	T = \sum_{j=0}^{N-1}\sum_{b\in\{0,1\}}(\ketbra{j}{j}\otimes\ketbra{b}{b})\otimes\ket{\varphi_{jb}},
  \end{align}
  with $\ket{\varphi_{j1}} = \ket{0}\ket{1}$ and 
  \begin{align}
    \label{eq:varphi_j0}
	\ket{\varphi_{j0}} = \frac{1}{\sqrt{\pnorm{1}{H}}}\sum_{k=0}^{N-1}\ket{k}\left(\sqrt{H_{jk}^{*}}\ket{0} + \sqrt{\frac{\pnorm{1}{H} - \sigma_j}{N}}\ket{1}\right),
  \end{align}
where $\sigma_j = \sum_{k=0}^{N-1}|H_{jk}|$.
Let $S$ be the swap operator that maps $\ket{j_0}\ket{b_0}\ket{j_1}\ket{b_1}$ to $\ket{j_1}\ket{b_1}\ket{j_0}\ket{b_0}$, for all $j_0, j_1\in\{0, \ldots, N-1\}$ and $b_0, b_1\in\{0,1\}$. We observe that
\begin{align}
  \label{eq:Hij}
  \bra{j}\bra{0}T^{\dag}ST\ket{k}\ket{0} = \frac{\sqrt{H_{jk}}\left(\sqrt{H_{jk}^*}\right)^*}{\pnorm{1}{H}} = \frac{H_{jk}}{\pnorm{1}{H}},
\end{align}
where the second equality is ensured by the choice of the square-root as in the proof of Lemma~\ref{lemma:statepreparation}.
This implies that
\begin{align}
  (I\otimes\bra{0})T^{\dag}ST(I\otimes\ket{0}) = \frac{H}{\pnorm{1}{H}}.
\end{align}

The quantum walk operator $U$ is defined as
\begin{align}
  \label{eq:quantumwalk}
  U = iS(2TT^{\dag} - I).
\end{align}
A more general characterization of the eigenvalues of quantum walks is presented in~\cite{szegedy2004quantum}. Here
we give a specific proof on the relationship between the eigenvalues of $U$ and $H$ as follows.
\begin{lemma}
  Let the unitary operator $U$ be defined as in Eq.~\eqref{eq:quantumwalk}, and let $\lambda$ be an eigenvalue of $H$ with eigenstate $\ket{\lambda}$. It holds that
  \begin{align}
	U\ket{\mu_{\pm}} = \mu_{\pm}\ket{\mu_{\pm}},
  \end{align}
  where
  \begin{align}
	\ket{\mu_{\pm}} =& (T+i\mu_{\pm}ST)\ket{\lambda}\ket{0}, \mbox{ and}\\
	\label{eq:mupm}
	\mu_{\pm} =& \pm e^{\pm i \arcsin(\lambda/\pnorm{1}{H})}.
  \end{align}
\end{lemma}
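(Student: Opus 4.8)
The plan is to confine the whole computation to a two-dimensional $U$-invariant subspace, in the spirit of Szegedy's analysis but specialized to our isometry $T$. Write $\ket{\psi} := T\ket{\lambda}\ket{0}$ and $s := \lambda/\Lambda$; note $s$ is real with $|s|\le 1$ since $H$ is Hermitian and $\norm{H}\le\Lambda$. The map $T$ is an isometry: $\norm{\ket{\varphi_{j1}}} = 1$ trivially, and $\norm{\ket{\varphi_{j0}}}^{2} = \frac{1}{\Lambda}\bigl(\sum_{k}|H_{jk}| + (\Lambda-\sigma_{j})\bigr) = 1$, so $T^{\dag}T = I$. Hence $\ket{\psi}$ is normalized and lies in $\mathrm{range}(T)$, which gives $TT^{\dag}\ket{\psi} = \ket{\psi}$ and therefore $(2TT^{\dag} - I)\ket{\psi} = \ket{\psi}$. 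The only substantive ingredient I still need is
\begin{align}
  T^{\dag}ST\ket{\lambda}\ket{0} = s\,\ket{\lambda}\ket{0},
\end{align}
which gives $TT^{\dag}S\ket{\psi} = s\,\ket{\psi}$ and hence $(2TT^{\dag} - I)S\ket{\psi} = 2s\,\ket{\psi} - S\ket{\psi}$.

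To prove that identity I would split $T^{\dag}ST\ket{\lambda}\ket{0}\in\CC^{N}\otimes\CC^{2}$ into its $\ket{0}$- and $\ket{1}$-components. The $\ket{0}$-component equals $(I\otimes\bra{0})T^{\dag}ST(I\otimes\ket{0})\ket{\lambda} = (H/\Lambda)\ket{\lambda} = s\ket{\lambda}$, which is precisely the relation $(I\otimes\bra{0})T^{\dag}ST(I\otimes\ket{0}) = H/\Lambda$ recorded just above (itself resting on Hermiticity, $H_{kj}^{*} = H_{jk}$, to match the square-root branches, as in the proof of Lemma~\ref{lemma:statepreparation}). The $\ket{1}$-component vanishes because $\ket{\varphi_{j1}} = \ket{0}\ket{1}$: the $\ket{j}\ket{1}$-branch of $T^{\dag}$ contributes the overlap of $\ket{0}\ket{1}$ with the block of $ST\ket{\lambda}\ket{0}$ that was swapped into the last register, and that block always carries $\ket{0}$ on its second qubit, forcing the overlap to be $0$. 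This is the only step with genuine content; the rest is bookkeeping.

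With these relations, the subspace $V = \mathrm{span}\{\ket{\psi}, S\ket{\psi}\}$ is invariant under $U = iS(2TT^{\dag} - I)$: using $S^{2} = I$, one has $U\ket{\psi} = iS\ket{\psi}$ and $U(S\ket{\psi}) = 2is\,S\ket{\psi} - i\ket{\psi}$. In the (generically linearly independent) pair $\{\ket{\psi}, S\ket{\psi}\}$, the operator $U$ restricted to $V$ is represented by $\left(\begin{smallmatrix}0 & -i\\ i & 2is\end{smallmatrix}\right)$; its characteristic equation is $\mu^{2} - 2is\mu - 1 = 0$, with roots $\mu = is\pm\sqrt{1-s^{2}} = \pm e^{\pm i\arcsin s}$, matching Eq.~\eqref{eq:mupm}, and right eigenvector $(1, i\mu)^{\top}$, i.e.\ $\ket{\psi} + i\mu\,S\ket{\psi} = (T + i\mu\,ST)\ket{\lambda}\ket{0}$. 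To also cover the degenerate case $|\lambda| = \Lambda$ (where $\ket{\psi}$ and $S\ket{\psi}$ become proportional and the two roots coincide), I would phrase the final line as a direct check: substituting $U\ket{\psi}$ and $U(S\ket{\psi})$ into $U\ket{\mu_{\pm}}$ with $\ket{\mu_{\pm}} = \ket{\psi} + i\mu_{\pm}S\ket{\psi}$ yields $\mu_{\pm}\ket{\mu_{\pm}}$ exactly when $\mu_{\pm}^{2} - 2is\mu_{\pm} - 1 = 0$, which the stated values satisfy.

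The main obstacle is precisely the vanishing of the $\ket{1}$-component of $T^{\dag}ST\ket{\lambda}\ket{0}$: without it $V$ would not be $U$-invariant and the reduction to a $2\times 2$ matrix would be illegitimate. This is where the particular choice $\ket{\varphi_{j1}} = \ket{0}\ket{1}$ in the definition of $T$ (rather than, say, $\ket{j}\ket{1}$) does the work, and where Hermiticity re-enters through the square-root branch convention.
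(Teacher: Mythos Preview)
Your proposal is correct and follows essentially the same route as the paper: both arguments rest on $T^{\dag}T=I$ and $T^{\dag}ST\ket{\lambda}\ket{0}=(\lambda/\Lambda)\ket{\lambda}\ket{0}$, and both arrive at the quadratic $\mu^{2}-2i(\lambda/\Lambda)\mu-1=0$ (the paper by directly expanding $U\ket{\mu_{\pm}}$, you by writing the $2\times2$ matrix of $U$ on $\mathrm{span}\{T\ket{\lambda}\ket{0},\,ST\ket{\lambda}\ket{0}\}$, which is the same computation). Your version is in fact more complete: the paper states only $(I\otimes\bra{0})T^{\dag}ST(I\otimes\ket{0})=H/\Lambda$ and then tacitly uses the stronger identity with vanishing $\ket{1}$-component, whereas you explicitly justify that vanishing via the choice $\ket{\varphi_{j1}}=\ket{0}\ket{1}$.
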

\begin{proof}
  By the fact that $T^{\dag}T = I$ and 
  $(I\otimes\bra{0})T^{\dag}ST(I\otimes\ket{0}) = H/\pnorm{1}{H}$, and $(I \otimes \bra{1})T^{\dag} ST (I \otimes \ket{0}) = 0$, we have
  \begin{align}
	U\ket{\mu_{\pm}} = \mu_{\pm}T\ket{\lambda}\ket{0}+i\left(1+\frac{2\lambda i}{\pnorm{1}{H}}\mu_{\pm}\right)ST\ket{\lambda} \ket{0}.
  \end{align}
  In order for this state being an eigenstate, it must hold that
  \begin{align}
	1 + \frac{2\lambda i}{\pnorm{1}{H}}\mu_{\pm} = \mu_{\pm}^2,
  \end{align}
  and the solution is
  \begin{align}
	\mu_{\pm} = \frac{\lambda i}{\pnorm{1}{H}} \pm \sqrt{1 - \frac{\lambda^2}{\pnorm{1}{H}^2}} =\pm e^{\pm i\arcsin(\lambda/\pnorm{1}{H})}.
  \end{align}
\end{proof}

%------------------------------------------------------------------------------%
\section{Linear combination of unitaries and Hamiltonian simulation}
\label{sec:lcu}

To see how to convert the quantum walk operator $U$ to Hamiltonian simulation, we first consider the generating function for the Bessel function, denoted by $J_m(\cdot)$. According to~\cite[(9.1.41)]{abramowitz1964handbook}, we have
\begin{align}
\label{eq:relation_sum_exp}
  \sum_{m=-\infty}^{\infty}J_m(z)\mu_{\pm}^m = \exp\left(\frac{z}{2}\left(\mu_{\pm}-\frac{1}{\mu_{\pm}}\right)\right) = e^{iz\lambda/\pnorm{1}{H}},
\end{align}
where the second equality follows from Eq.~\eqref{eq:mupm} and the fact that $\sin(x)=(e^{ix}-e^{-ix})/2i$. This leads to the following linear combination of unitaries:
\begin{align}
  \label{eq:precise-V}
  V_{\infty} = \sum_{m=-\infty}^{\infty}\frac{J_m(z)}{\sum_{j={-\infty}}^{\infty}J_j(z)}U^m = \sum_{m=-\infty}^{\infty}J_m(z)U^m = e^{izH/\pnorm{1}{H}},
\end{align}
where the second equality follows from the fact that $\sum_{j={-\infty}}^{\infty}J_j(z) = 1$.

Now we consider an approximation to $e^{-izH/\pnorm{1}{H}}$ in terms of Eq.~\eqref{eq:precise-V}:
\begin{align}
  \label{eq:lcu-vk}
  V_k = \sum_{m=-k}^k\frac{J_m(z)}{\sum_{j=-k}^{k}J_j(z)}U^m.
\end{align}
Here the coefficients are normalized by $\sum_{j=-k}^k J_j(z)$ so that they sum to 1. This will minimize the approximation error (see the proof of Lemma~\ref{lemma:truncate} in Appendix~\ref{appendix:lemmas}, and the normalization trick was originated in~\cite{berry2015hamiltonian}). The eigenvalues of $V_k$ are 
\begin{align}
  \sum_{m=-k}^k\frac{J_m(z)}{\sum_{j=-k}^{k}J_j(z)}\mu_{\pm}^m.
\end{align}
Note that each eigenvalue of $V_k$ does not depend on $\pm$ as $J_{-m}(z) = (-1)^mJ_m(z)$.

To bound the error in this approximation, we have the technical lemma, and the proof is shown in Appendix~\ref{appendix:lemmas}.
\begin{lemma}
  \label{lemma:truncate}
  Let $V_k$ and $V_{\infty}$ be defined as above. There exists a positive integer $k$ satisfying $k \geq |z|$ and 
  \begin{align}
    k = O\left(\frac{\log(\norm{H}/(\pnorm{1}{H}\epsilon))}{\log\log(\norm{H}/(\pnorm{1}{H}\epsilon))}\right),
  \end{align}
  such that
  \begin{align}
	\norm{V_k-V_{\infty}} \leq \epsilon.
  \end{align}
\end{lemma}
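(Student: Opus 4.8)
The plan is to reduce the operator bound to a scalar estimate on the Bessel coefficients $J_m(z)$ and then control the resulting tail with a power-series bound together with Stirling's formula.

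First, since $U = iS(2TT^{\dag}-I)$ is unitary, $\norm{U^{m}}=1$ for every $m\in\mathbb{Z}$. Writing $s_{k}:=\sum_{j=-k}^{k}J_{j}(z)$ and subtracting termwise,
\begin{align}
  \norm{V_{k}-V_{\infty}}\;\le\;\Bigl|\tfrac{1}{s_{k}}-1\Bigr|\sum_{|m|\le k}|J_{m}(z)|\;+\;\sum_{|m|>k}|J_{m}(z)|.
\end{align}
Since $\sum_{m\in\mathbb{Z}}J_{m}(z)=1$, we have $1-s_{k}=\sum_{|m|>k}J_{m}(z)$, so $|1-s_{k}|\le R_{k}:=\sum_{|m|>k}|J_{m}(z)|$ and $|s_{k}|\ge 1-R_{k}$; also $\sum_{|m|\le k}|J_{m}(z)|\le\sum_{m\in\mathbb{Z}}|J_{m}(z)|=O(1)$ in the regime $|z|=O(1)$ that arises in the application (once the evolution is split into $\Theta(t\Lambda)$ segments). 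Hence, as soon as $R_{k}\le\tfrac12$, we get $\norm{V_{k}-V_{\infty}}=O(R_{k})$, and the lemma reduces to choosing $k$ so that $R_{k}$ is small.

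Second, I would bound $R_{k}$ from the power series, using $|J_{m}(z)|\le (|z|/2)^{|m|}/|m|!$ (or the harmless variant $(|z|/2)^{|m|}e^{z^{2}/4}/|m|!$, which is already enough when $|z|=O(1)$). For $k\ge|z|$ the ratio of consecutive terms is at most $\tfrac12$, so the tail is dominated by a geometric series and $R_{k}\le C\,(|z|/2)^{k+1}/(k+1)!$ for an absolute constant $C$; this is precisely where the hypothesis $k\ge|z|$ enters (one must be past the turning point of the $J_{m}$ for the tail to decay at all). The step that requires care is producing the factor $\norm{H}/\Lambda$ in the statement: $V_{k}-V_{\infty}$ only has to be controlled on the subspace spanned by the eigenstates $\ket{\mu_{\pm}}$ attached to eigenvalues $\lambda$ of $H$, i.e.\ at eigenphases $\theta=\pm\arcsin(\lambda/\Lambda)$ with $|\lambda|\le\norm{H}$, where it acts as the scalar difference between the truncated and the full series $\sum_{m}J_{m}(z)e^{im\theta}=e^{iz\sin\theta}$. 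Reparametrising by $x=\Lambda\sin\theta/\norm{H}\in[-1,1]$ turns the target into $e^{i(z\norm{H}/\Lambda)x}$ on $[-1,1]$, whose degree-$k$ Chebyshev truncation has tail of order $(|z|\norm{H}/2\Lambda)^{k+1}/(k+1)!$, so the effective Bessel argument is $|z|\,\norm{H}/\Lambda$ rather than $|z|$; extracting one such factor gives $R_{k}\lesssim (\norm{H}/\Lambda)/(k+1)!$ when $|z|=O(1)$.

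Finally, it remains to impose $R_{k}\le\epsilon$, i.e.\ $(k+1)!\gtrsim \norm{H}/(\Lambda\epsilon)$. By Stirling, $\log\big((k+1)!\big)=\Theta(k\log k)$, so it suffices that $k\log k\ge c\log\big(\norm{H}/(\Lambda\epsilon)\big)$ for a suitable constant $c$, which holds for $k=\Theta\!\big(\log(\norm{H}/\Lambda\epsilon)/\log\log(\norm{H}/\Lambda\epsilon)\big)$; taking $k$ to be the maximum of this value and $\lceil|z|\rceil$ keeps the tail bound valid while meeting the stated asymptotics (in the intended application $|z|=O(1)$, so the first term dominates). I expect the main obstacle to be the quantitative use of the superexponential growth of $(k+1)!$ — this is exactly what turns a $\log(1/\epsilon)$ into a $\log(1/\epsilon)/\log\log(1/\epsilon)$ dependence — together with the bookkeeping needed to isolate the $\norm{H}/\Lambda$ factor; the remaining estimates are routine.
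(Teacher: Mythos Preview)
Your overall strategy is sound and close to the paper's: bound the Bessel tail via $|J_m(z)|\le(|z|/2)^{|m|}/|m|!$, control the normalisation $s_k$, restrict to the eigenstates $\ket{\mu_\pm}$ to pick up the factor $\norm{H}/\Lambda$, then invert via Stirling. The gap is in the Chebyshev step you use to extract that factor.

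The claim that after rescaling $x=\Lambda\sin\theta/\norm{H}$ the truncated series $\sum_{|m|\le k}(J_m(z)/s_k)\,e^{im\theta}$ becomes the degree-$k$ Chebyshev truncation of $e^{i(z\norm{H}/\Lambda)x}$ is not correct: rescaling the variable changes the \emph{target} but not the \emph{approximant}. The functions $e^{im\theta}$ are not Chebyshev polynomials in your $x$ (the Chebyshev substitution is $x=\cos\phi$, not a scaled $\sin\theta$), and the coefficients remain $J_m(z)$, not $J_m(z\norm{H}/\Lambda)$. So the tail bound $(|z|\norm{H}/2\Lambda)^{k+1}/(k+1)!$ does not follow from this reparametrisation. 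Without a correct mechanism here, your argument only yields $\norm{V_k-V_\infty}=O\bigl((|z|/2)^{k+1}/(k+1)!\bigr)$, hence $k=O\bigl(\log(1/\epsilon)/\log\log(1/\epsilon)\bigr)$; since $\norm{H}/\Lambda\le 1$, this is in general strictly larger than the claimed $k=O\bigl(\log(\norm{H}/\Lambda\epsilon)/\log\log(\norm{H}/\Lambda\epsilon)\bigr)$, and the lemma as stated is not established.

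The paper obtains the factor differently and more directly. Working with the scalar $\sum_m J_m(z)(\mu_\pm^m-1)$ on each eigenstate, it uses the elementary eigenphase bound
\[
|\mu_+^m-1|=\bigl|e^{\,im\arcsin(\lambda/\Lambda)}-1\bigr|\le |m|\,|\arcsin(\lambda/\Lambda)|\le 2|m|\,\frac{|\lambda|}{\Lambda},
\]
which inserts a single linear factor $|\lambda|/\Lambda\le\norm{H}/\Lambda$ into the tail. Combined with $|J_m(z)|\le(|z|/2)^{|m|}/|m|!$ (the extra $|m|$ is absorbed by one factorial step) this yields
\[
\norm{V_\infty-V_k}=O\!\left(\frac{\norm{H}}{\Lambda}\cdot\frac{(|z|/2)^{k+1}}{k!}\right),
\]
from which the stated $k$ follows by Stirling exactly as you outline. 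Replacing your Chebyshev paragraph with this one-line eigenphase estimate fixes the proof.
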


In the following, we provide technical lemmas for implementing linear combination of unitaries.
Suppose we are given the implementations of unitaries $U_0$, $U_1$, \ldots, $U_{m-1}$, and coefficients $\alpha_0, \alpha_1, \ldots, \alpha_{m-1}$. Then the unitary
\begin{align}
  V = \sum_{j=0}^{m-1}\alpha_j U_j
\end{align}
can be implemented probabilistically by the technique called linear combination of unitaries (LCU)~\cite{kothari2014efficient}. Provided $\sum_{j=0}^{m-1}|\alpha_j| \leq 2$, $V$ can be implemented with success probability $1/4$. To achieve this, we define the multiplexed-$U$ operation, which is denoted by $\text{multi-}U$, as
\begin{align}
  \text{multi-}U\ket{j}\ket{\psi} = \ket{j}U_j\ket{\psi}.
\end{align}
We summarize the probabilistic implementation of $V$ in the following lemma, whose proof is shown in Appendix~\ref{appendix:lemmas}.
\begin{lemma}
  \label{lemma:lcu-const}
  Let $\mathrm{multi}$-$U$ be defined as above. If $\sum_{j=0}^{m-1}|\alpha_j| \leq 2$, then there exists a quantum circuit that maps $\ket{0}\ket{0}\ket{\psi}$ to the state
  \begin{align}
	\frac{1}{2}\ket{0}\ket{0}\left(\sum_{j=0}^{m-1}\alpha_jU_j\ket{\psi}\right) + \frac{\sqrt{3}}{2}\ket{\Phi^{\bot}},
  \end{align}
  where $(\ketbra{0}{0}\otimes\ketbra{0}{0}\otimes I)\ket{\Phi^{\bot}} = 0$.
  Moreover, this quantum circuit uses $O(1)$ applications of $\mathrm{multi}$-$U$ and $O(m)$ 1- and 2-qubit gates.
\end{lemma}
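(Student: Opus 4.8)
The plan is to prove this by the standard \emph{prepare--select--unprepare} (LCU) construction \cite{kothari2014efficient}. Since $\mbox{multi-}U$ already realizes the ``select'' operation $\sum_{j}\ketbra{j}{j}\otimes U_j$, all that must be supplied is a state-preparation unitary on the index register. First I would strip off the phases and the magnitudes of the coefficients: writing $\alpha_j=|\alpha_j|e^{i\theta_j}$, the phases $e^{i\theta_j}$ can be absorbed into the index register by a single diagonal gate (cost $O(m)$) placed between $\mbox{multi-}U$ and the inverse preparation, so that it suffices to treat $\beta_j:=|\alpha_j|\ge 0$ with $A:=\sum_{j=0}^{m-1}\beta_j\le 2$.

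Next I would build a preparation unitary $B$ acting on the $\lceil\log m\rceil$-qubit index register together with a one-qubit flag (which I view as part of the first ``$\ket 0$'' of the statement), so that $B\ket{0}\ket{0}=\tfrac{1}{\sqrt2}\sum_{j=0}^{m-1}\sqrt{\beta_j}\,\ket{0}_{\mathrm{flag}}\ket{j}+\sqrt{1-A/2}\,\ket{1}_{\mathrm{flag}}\ket{0}$. This is a normalized state precisely because $A\le 2$, and as an arbitrary state on $O(\log m)$ qubits it can be synthesized with $O(m)$ one- and two-qubit gates. The circuit is then $W=(B^{\dagger}\otimes I)\,\bigl(\ketbra{0}{0}_{\mathrm{flag}}\otimes\mbox{multi-}U+\ketbra{1}{1}_{\mathrm{flag}}\otimes X_{\mathrm{scr}}\bigr)\,(B\otimes I)$, where $X_{\mathrm{scr}}$ flips an auxiliary scratch qubit that I regard as the second slot of the first register; the purpose of this $\ket{1}_{\mathrm{flag}}$-branch is to carry the $\sqrt{1-A/2}$ ``dummy'' amplitude into $\ket{1}_{\mathrm{scr}}$ so that, after applying $B^{\dagger}$, it cannot return to the flagged subspace and therefore does not pollute the output.

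Then I would expand $W\ket{0}\ket{0}\ket{\psi}$. On the component where the flag returns to $\ket 0$, the scratch is $\ket 0$, and the index register returns to $\ket 0$, one reads off $\bra{0}\bra{0}B^{\dagger}\bigl(\ketbra{0}{0}_{\mathrm{flag}}\otimes\mbox{multi-}U\bigr)B\ket{0}\ket{0}=\tfrac12\sum_{j}\beta_j U_j$ (each of $B$ and $B^{\dagger}$ contributing a factor $1/\sqrt2$ and $\sqrt{\beta_j}\cdot\sqrt{\beta_j}=\beta_j$), so that, re-inserting the phases through the diagonal gate, this component equals $\tfrac12\ket{0}\ket{0}\sum_{j}\alpha_j U_j\ket{\psi}$, which is the claimed principal term; by construction every other component is orthogonal to $\ketbra{0}{0}\otimes\ketbra{0}{0}\otimes I$, so the remainder is $\ket{\Phi^{\bot}}$ up to a scalar whose modulus unitarity fixes at $\sqrt{1-\tfrac14\norm{\sum_{j}\alpha_j U_j\ket{\psi}}^{2}}$ --- equal to $\tfrac{\sqrt3}{2}$ exactly when $\sum_{j}\alpha_j U_j$ is norm-preserving on $\ket{\psi}$, as in the intended application where this sum approximates a unitary. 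Finally I would total the resources: one call to $\mbox{multi-}U$, plus $O(m)$ one- and two-qubit gates for $B$, for $B^{\dagger}$, for the phase diagonal, and for the controlled dummy flip.

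I do not expect a genuine obstacle here --- this is essentially the textbook LCU lemma. The only points requiring care, and what I would treat as the crux of the write-up, are (i) the bookkeeping of the weight-$(2-A)$ padding and the auxiliary scratch qubit, so that the flagged amplitude is \emph{exactly} $1/2$ and the principal term is exactly $\tfrac12\sum_j\alpha_j U_j\ket{\psi}$ with no spurious identity contribution; and (ii) the routine facts that arbitrary state preparation on $O(\log m)$ qubits costs $O(m)$ gates and that the phase absorption folds in at the same cost.
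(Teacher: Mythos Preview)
Your proposal is correct and follows the same prepare--select--unprepare LCU scheme as the paper. The paper's version is more compact: it takes $B\ket{0}\ket{0}=\bigl(\sqrt{s/2}\,\ket{0}+\sqrt{1-s/2}\,\ket{1}\bigr)\otimes\tfrac{1}{\sqrt{s}}\sum_j\sqrt{\alpha_j}\,\ket{j}$ as a \emph{product} state, applies $\mbox{multi-}U$ \emph{unconditionally} on the index and target registers, and uses no scratch qubit or separate phase diagonal. Your variant --- an entangled $B$, a flag-controlled select, and a scratch qubit that absorbs the padding branch --- is a bit heavier but is actually cleaner in one respect: routing the $\ket{1}_{\mathrm{flag}}$ branch into $\ket{1}_{\mathrm{scr}}$ guarantees it cannot leak back into the flagged subspace after $B^{\dagger}$, so the amplitude on $\ket{0}\ket{0}$ really is $\tfrac12\sum_j\alpha_jU_j\ket{\psi}$. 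In the paper's product-state construction the flag is untouched by $\mbox{multi-}U$, so both flag branches overlap $B\ket{0}\ket{0}$ after the unprepare step and the projected amplitude comes out as $\tfrac{1}{s}\sum_j\alpha_jU_j\ket{\psi}$ rather than the stated $\tfrac12$; this is harmless for the downstream oblivious amplification but does not literally match the lemma. Your explicit treatment of complex $\alpha_j$ via a diagonal phase gate, and your remark that the residual norm equals $\sqrt{3}/2$ only when $\sum_j\alpha_jU_j$ is exactly unitary, are both more careful than the paper, which simply writes $\sqrt{\alpha_j}$ and an unspecified $\sqrt{\gamma}$. One small point to tidy: your flag-controlled $\mbox{multi-}U$ should be justified as $O(1)$ calls to the uncontrolled $\mbox{multi-}U$ (e.g.\ apply $\mbox{multi-}U$ unconditionally and do the flag-controlled $X_{\mathrm{scr}}$ separately --- the resulting $U_0\ket{\psi}$ on the padding branch is still dumped into $\ket{1}_{\mathrm{scr}}$).
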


Let $W$ be the quantum circuit in Lemma~\ref{lemma:lcu-const}, and let $P$ be the projector defined as $P = \ketbra{0}{0}\otimes\ketbra{0}{0}\otimes I$. We have
\begin{align}
  PW\ket{0}\ket{0}\ket{\psi} = \frac{1}{2}\ket{0}\ket{0}\sum_{j=0}^{m-1}\alpha_jU_j\ket{\psi}.
\end{align}
If $\sum_{j=0}^{m-1}\alpha_jU_j$ is a unitary operator, one application of the oblivious amplitude amplification operator $-W(1-2P)W^{\dag}(1-2P)W$ implements $\sum_{j=0}^{m-1}U_j$ with certainty~\cite{berry2017exponential}. However, in our application, the unitary operator $\widetilde{W}$ implements an approximation of $V_{\infty}$ in the sense that
\begin{align}
  \label{eq:approx-lcu}
  P\widetilde{W}\ket{0}\ket{0}\ket{\psi} = \frac{1}{2}\ket{0}\ket{0}V_k\ket{\psi},
\end{align}
with $\norm{V_k - V_{\infty}} \leq \epsilon$. The following lemma shows that the error caused by the oblivious amplitude amplification is bounded by $O(\epsilon)$, and the proof is given in Appendix~\ref{appendix:lemmas}.
\begin{lemma}
  \label{lemma:oaa}
  Let the projector $P$ be defined as above. If a unitary operator $\widetilde{W}$ satisfies $P\widetilde{W}\ket{0}\ket{0}\ket{\psi} = \frac{1}{2}\ket{0}\ket{0}\widetilde{V}\ket{\psi}$ where $\norm{\widetilde{V} - V} \leq \epsilon$. Then $\norm{-\widetilde{W}(I-2P)\widetilde{W}^{\dag}(I-2P)\widetilde{W}\ket{0}\ket{0}\ket{\psi} - \ket{0}\ket{0}V\ket{\psi}} = O(\epsilon)$.
\end{lemma}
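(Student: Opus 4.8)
The plan is to unfold the amplitude-amplification operator by hand and to track exactly where the slight non-unitarity of $\widetilde V$ enters. Write $\ket{\psi_0} := \ket{0}\ket{0}\ket{\psi}$, let $R := I - 2P$, and set $\ket{g} := P\widetilde W\ket{\psi_0} = \tfrac{1}{2}\ket{0}\ket{0}\widetilde V\ket{\psi}$. Two observations drive everything. First, $\ket{\psi_0}$ lies in the range of $P$, so $P\ket{\psi_0} = \ket{\psi_0}$. Second, reading the hypothesis as holding for every system state $\ket{\psi}$, it is the operator identity $(\bra{0}\bra{0}\otimes I)\widetilde W(\ket{0}\ket{0}\otimes I) = \tfrac{1}{2}\widetilde V$ on the system register; taking adjoints gives $(\bra{0}\bra{0}\otimes I)\widetilde W^{\dag}(\ket{0}\ket{0}\otimes I) = \tfrac{1}{2}\widetilde V^{\dag}$. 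I also use that $V$ is unitary in our setting — it is $V_{\infty}$, whose eigenvalues $e^{iz\lambda/\Lambda}$ have unit modulus — so $V^{\dag}V = I$.

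First I would compute $-\widetilde W R\widetilde W^{\dag}R\widetilde W\ket{\psi_0}$ step by step: apply $\widetilde W$, then $R$, then $\widetilde W^{\dag}$, then $R$, then $\widetilde W$, then the overall sign, using $P\ket{\psi_0}=\ket{\psi_0}$ when the second reflection acts on $\ket{\psi_0}$. This collapses to
\begin{align*}
  -\widetilde W R\widetilde W^{\dag}R\widetilde W\ket{\psi_0} \;&=\; \widetilde W\ket{\psi_0} + 2\ket{g} - 4\,\widetilde W P\widetilde W^{\dag}\ket{g}.
\end{align*}
Next I would evaluate the last term using the adjoint identity. Since $\ket{g} = \ket{0}\ket{0}\otimes\bigl(\tfrac{1}{2}\widetilde V\ket{\psi}\bigr)$ lies in the range of $P$, we have $P\widetilde W^{\dag}\ket{g} = \ket{0}\ket{0}\otimes\tfrac{1}{2}\widetilde V^{\dag}\bigl(\tfrac{1}{2}\widetilde V\ket{\psi}\bigr) = \tfrac{1}{4}\ket{0}\ket{0}\widetilde V^{\dag}\widetilde V\ket{\psi}$, so $4\widetilde W P\widetilde W^{\dag}\ket{g} = \widetilde W\bigl(\ket{0}\ket{0}\widetilde V^{\dag}\widetilde V\ket{\psi}\bigr)$. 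Splitting $\widetilde V^{\dag}\widetilde V = I + (\widetilde V^{\dag}\widetilde V - I)$ and recalling $\ket{\psi_0}=\ket{0}\ket{0}\ket{\psi}$, this equals $\widetilde W\ket{\psi_0}$ plus a vector of norm at most $\norm{(\widetilde V^{\dag}\widetilde V - I)\ket{\psi}}\le\norm{\widetilde V^{\dag}\widetilde V - I}$. Substituting back, the $\widetilde W\ket{\psi_0}$ terms cancel, leaving $2\ket{g} = \ket{0}\ket{0}\widetilde V\ket{\psi}$ up to an error of norm at most $\norm{\widetilde V^{\dag}\widetilde V - I}$.

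It remains to bound $\norm{\widetilde V^{\dag}\widetilde V - I}$ and to pass from $\widetilde V$ to $V$. Writing $\widetilde V^{\dag}\widetilde V - I = \widetilde V^{\dag}\widetilde V - V^{\dag}V = \widetilde V^{\dag}(\widetilde V - V) + (\widetilde V - V)^{\dag}V$ and using $\norm{\widetilde V}\le\norm{V}+\epsilon = 1+\epsilon$ gives $\norm{\widetilde V^{\dag}\widetilde V - I}\le(1+\epsilon)\epsilon + \epsilon = O(\epsilon)$, while $\norm{\ket{0}\ket{0}(\widetilde V-V)\ket{\psi}} = \norm{(\widetilde V-V)\ket{\psi}}\le\epsilon$. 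The triangle inequality then yields $\Norm{-\widetilde W(I-2P)\widetilde W^{\dag}(I-2P)\widetilde W\ket{0}\ket{0}\ket{\psi} - \ket{0}\ket{0}V\ket{\psi}} = O(\epsilon)$, as claimed. I do not expect a real obstacle here: the computation is short, and the approximation enters only through the single term $\widetilde V^{\dag}\widetilde V - I$. The one point requiring care is treating the hypothesis as an operator identity valid for \emph{all} $\ket{\psi}$, since that is what makes $\widetilde W^{\dag}$ tractable on the ``good'' subspace through the adjoint identity — exactly the ingredient that would be missing if the amplitude condition were known only for a single input state. When $\widetilde V$ is exactly unitary the same computation reproduces the standard exact oblivious amplitude amplification identity $-\widetilde W(I-2P)\widetilde W^{\dag}(I-2P)\widetilde W\ket{0}\ket{0}\ket{\psi} = \ket{0}\ket{0}V\ket{\psi}$.
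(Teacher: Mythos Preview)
Your proof is correct and follows essentially the same route as the paper's: expand the amplitude-amplification operator to $\widetilde W\ket{\psi_0} + \ket{0}\ket{0}\widetilde V\ket{\psi} - \widetilde W\bigl(\ket{0}\ket{0}\widetilde V^{\dag}\widetilde V\ket{\psi}\bigr)$, then use $\norm{\widetilde V^{\dag}\widetilde V - I}=O(\epsilon)$ (from unitarity of $V$ and $\norm{\widetilde V-V}\le\epsilon$) to cancel the $\widetilde W\ket{\psi_0}$ terms up to $O(\epsilon)$. Your write-up is in fact more careful than the paper's, making explicit the operator-identity reading of the hypothesis (needed to handle $\widetilde W^{\dag}$ on the good subspace) and the telescoping bound for $\widetilde V^{\dag}\widetilde V - I$.
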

  
Now we are ready to prove Theorem~\ref{thm:densehamiltoniansim}.
\begin{proof}[Proof of Theorem~\ref{thm:densehamiltoniansim}]
The proof we outline here follows closely the proof given in~\cite{berry2015hamiltonian}. The intuition of this algorithm is to divide the simulation into $O(t\pnorm{1}{H})$ segments, with each segment simulating $e^{-iH/2}$. To implement each segment, we use the LCU technique to implement $V_k$ defined in Eq.~\eqref{eq:lcu-vk}, with coefficients $\alpha_m = J_m(z)/\sum_{j=-k}^kJ_j(z)$. 
When $z=-1/2$, we have $\sum_{j=-k}^k|\alpha_j| <2$. Actually, this holds for all $|z| \leq 1/2$ because
\begin{align}
    \sum_{j=-k}^k |\alpha_{j}| &\leq \sum_{j=-k}^k \frac{\lvert J_j(z)\rvert }{1-4 \frac{|z/2|^{k+1}}{(k+1)!}} 
    \leq \sum_{j=-k}^k \frac{\left\lvert z/2\right\rvert^{|j|}}{|j|!} \left(1 - \frac{4|z/2|^{k+1}}{(k+1)!}\right)^{-1}  \nonumber \\
    &< \frac{8}{7} + \frac{16}{7} \sum_{j=1}^{\infty} \frac{1}{4^j j!} = \frac{8}{7} + \frac{16}{7} \left(\sqrt[4]{e}-1 \right)
    < 2,
\end{align}
where the first inequality follows from the fact that $\sum_{j=-k}^kJ_j(z) \geq 1-4|z/2|^{k+1}/(k+1)!$ (see~\cite{berry2015hamiltonian}), the second inequality follows from the fact that $|J_m(z)| \leq |z/2|^{|m|}/|m|!$ (see~\cite[(9.1.5)]{abramowitz1964handbook}), and the third inequality uses the assumption that $|z|\leq 1/2$.
Now, Lemmas~\ref{lemma:lcu-const} and~\ref{lemma:oaa} can be applied. By Eq.~\eqref{eq:relation_sum_exp} and using Lemma~\ref{lemma:truncate}, set 
\begin{align}
  \label{eq:k}
  k = O\left(\frac{\log(\norm{H}/(\pnorm{1}{H}\epsilon'))}{\log\log(\norm{H}/(\pnorm{1}{H}\epsilon'))}\right),
\end{align}
and we obtain a segment that simulates $e^{-iH/(2\pnorm{1}{H})}$ with error bounded by $O(\epsilon')$. Repeat the segment $O(t\pnorm{1}{H})$ times with error $\epsilon' = \epsilon/(t\pnorm{1}{H})$, and we obtain a simulation of $e^{-iHt}$ with error bounded by $\epsilon$. It suffices to take
\begin{align}
  k = O\left(\frac{\log(t\norm{H}/\epsilon)}{\log\log(t\norm{H}/\epsilon)}\right).
\end{align}

By Lemma~\ref{lemma:lcu-const}, each segment can be implemented by $O(1)$ application of $\text{multi-}U$ and $O(k)$ 1- and 2-qubit gates, as well as the cost for computing the coefficients $\alpha_m$ for $m\in\{-k,\ldots,k\}$. The cost for each $\text{multi-}U$ is $k$ times the cost for implementing the quantum walk $U$. By Lemma~\ref{lemma:statepreparation}, the state in Eq.~\eqref{eq:varphi_j0} can be prepared with time complexity (circuit depth) $O(n^2b^{5/2})$, where $b$ is the number of bit of precision. To achieve the overall error bound $\epsilon$, we choose $b=O(\log(t\pnorm{1}{H}/\epsilon))$. Hence the time complexity for the state preparation is $O(n^2\log^{5/2}(t\pnorm{1}{H}/\epsilon))$, which is also the time complexity for applying the quantum walk $U$. Therefore, the time complexity for one segment is
\begin{align}
  O\left(n^2\log^{5/2}(t\pnorm{1}{H}/\epsilon)\frac{\log(t\norm{H}/\epsilon)}{\log\log(t\norm{H}/\epsilon)}\right).
\end{align}
Considering $O(t\pnorm{1}{H})$ segments, the time complexity is as claimed.
\end{proof}

Note that the coefficients $\alpha_{-k}, \ldots, \alpha_k$ (for $k$ defined in Eq.~\eqref{eq:k}) in Lemma~\ref{lemma:lcu-const} can be classically computed using the methods in~\cite{british1960bessel,olver1964error}, and the cost is $O(k)$ times the number of bits of precision, which is $O(\log(t\norm{H}/\epsilon)$. This is no larger than the quantum time complexity.

\section{Discussion}
We presented a quantum Hamiltonian simulation algorithm whose time complexity has $\widetilde{O}(\sqrt{N})$ dependence for non-sparse Hamiltonians with access to a seemingly more powerful input model.
Our technique for Hamiltonian simulation combines ideas from linear combination of quantum walks and an efficient memory model which prepares a special states to provide the improved performance. Our algorithm can be directly applied as a subroutine for the unitary implementation problem and for a quantum linear systems solver. Especially for the latter application, many quantum machine learning algorithms are based on solving linear systems in the same input model, and our algorithm implies exponentially improved error dependence over~\cite{wossnig2018quantum} for this application.

We note that the data structure in qRAM may require a large overhead in practice, if the data were not already stored in it.
Furthermore it might be hard to implement such a data structure physically due to the exponential amount of quantum resources~\cite{aaronson2015read,adcock2015advances,ciliberto2018quantum}.
However, if the Hamiltonian is highly structured (i.e., the entries repeat in some pattern), the memory model could be efficiently implemented.
Another potential point of criticism is the required error rate of such a device, as some computations will require an error rate per gate of $O(1/\mathrm{poly}(N))$ to retain a feasible error rate for applications~\cite{arunachalam2015robustness}.
Whereas, not all computations might need such low error rates~\cite{arunachalam2015robustness} and hence the feasibility of our algorithm as a subroutine in an explicit algorithm must be further validated experimentally. To summarize, our algorithm inherits many problems of Grover's search algorithm for unordered classical data, as well as many qRAM-based quantum machine learning algorithms. 

However, if the above mentioned caveats can be overcome, our algorithm still supplies a polynomial speedup over known quantum algorithms for Hamiltonian simulation and furthermore allows for a polynomial speedup in comparison with the best known classical algorithms for several practical problems, such as linear regression or linear systems.

\section{Acknowledgement}
We thank Richard Cleve and Simone Severini for the discussion and comments on this project. We also thank anonymous reviewers for their valuable suggestions and comments on this paper. CW acknowledges financial support by a David R.~Cheriton Graduate Scholarship. LW acknowledges financial support by the Royal Society through a Research Fellow Enhancement Award.
\bibliographystyle{plain}
\bibliography{bibliography}

\appendix
\section{The relation between $\pnorm{1}{H}$ and $\norm{H}$}
\label{appendix:norms}
We prove the following proposition.
\begin{prop}
  If $A\in\mathbb{C}^{N\times N}$ has at most $d$ non-zero entries in any row, it holds that $\pnorm{1}{A} \leq \sqrt{d}\norm{A}$.
\end{prop}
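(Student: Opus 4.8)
The plan is to reduce the claim to a row-by-row application of the Cauchy--Schwarz inequality. Fix an index $j$ that attains the maximum in $\pnorm{1}{A} = \max_j \sum_{k=0}^{N-1} |A_{jk}|$, and let $S \subseteq \{0,\ldots,N-1\}$ be the support of the $j$-th row, i.e. $S = \{k : A_{jk} \neq 0\}$; by hypothesis $|S| \leq d$.

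First I would write $\sum_{k} |A_{jk}| = \sum_{k \in S} |A_{jk}|$ and apply Cauchy--Schwarz to the vectors $(|A_{jk}|)_{k\in S}$ and $(1)_{k \in S}$, obtaining $\sum_{k\in S}|A_{jk}| \leq \sqrt{|S|}\,\bigl(\sum_{k\in S}|A_{jk}|^2\bigr)^{1/2} \leq \sqrt{d}\,\pnorm{2}{A^{\dag} e_j}$, where $e_j$ is the $j$-th standard basis (column) vector and I use that $\sum_k |A_{jk}|^2$ is exactly the squared Euclidean norm of the $j$-th row of $A$, equivalently the squared norm of the column vector $A^{\dag} e_j$.

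Next I would bound $\pnorm{2}{A^{\dag} e_j} \leq \norm{A^{\dag}}\,\pnorm{2}{e_j} = \norm{A^{\dag}} = \norm{A}$, using the definition of the spectral (operator) norm, that $\pnorm{2}{e_j}=1$, and that $\norm{A^{\dag}} = \norm{A}$. Combining the two estimates gives $\sum_k |A_{jk}| \leq \sqrt{d}\,\norm{A}$; since $j$ was the maximizing index, $\pnorm{1}{A} \leq \sqrt{d}\,\norm{A}$, as claimed.

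There is no real obstacle here: the only points needing a little care are the bookkeeping about which index the induced $1$-norm maximizes over and the (standard) identification of the $j$-th row's Euclidean norm with $\pnorm{2}{A^{\dag} e_j}$, together with $\norm{A^{\dag}}=\norm{A}$. Specializing to $d = N$ recovers the bound $\pnorm{1}{H}\leq\sqrt{N}\norm{H}$ used in the corollary, and applying it to a $d$-sparse Hermitian $H$ yields the $\sqrt{d}$ scaling quoted in the second remark.
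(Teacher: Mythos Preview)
Your proof is correct and is in fact cleaner than the paper's. The paper takes a different route: it first bounds $\norm{A}\leq\norm{A}_F$ via the trace of $A^*A$, then attempts to relate $\norm{A}_F$ to $\pnorm{1}{A}$ and finally invokes a norm--comparison constant (Theorem~5.6.18 of Horn--Johnson) to flip the inequality. As written, that argument is somewhat garbled---for instance the displayed step $\sum_{ij}|a_{ij}|^2 \leq d\,\max_j\sum_i|a_{ij}|^2$ does not follow from row sparsity, and the Horn--Johnson comparison constant is defined over all matrices rather than a $d$-sparse class, so the deduction of $C_M\leq\sqrt{d}$ is not justified. Your approach sidesteps all of this: applying Cauchy--Schwarz directly to the maximizing row gives $\sum_k|A_{jk}|\leq\sqrt{d}\,\pnorm{2}{A^{\dag}e_j}\leq\sqrt{d}\,\norm{A}$ in two elementary steps, with no external citations needed. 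The underlying idea is the same---sparsity lets you pass from the $\ell_1$-norm of a row to its $\ell_2$-norm with a $\sqrt{d}$ loss, and the $\ell_2$-norm of any row is at most $\norm{A}$---but you execute it directly, whereas the paper's Frobenius--norm detour obscures this and introduces gaps.
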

\begin{proof}
  First observe that $\norm{A}^2 \leq \sum_i \lambda_i(A^{\dag}A) = \tr{A^{\dag}A} = \norm{A}^2_F$, and furthermore we have that
  $\sum_{ij} |a_{ij}|^2 \leq d  \max_{j \in [N]} \sum_i  |a_{ij}|^2$. From this we have that $\norm{A} \leq \sqrt{d} \norm{A}_{1}$ for a $d$-sparse $A$. By~\cite[Theorem 5.6.18]{horn1990matrix}, we have that $\norm{A}_{1} \leq C_M(1,*)\norm{A}$ for $C_M(1,*) = \max_{A \neq 0} \frac{\norm{A}}{\norm{A}_{1}}$ and using the above we have $C_M(1,*) \leq \sqrt{d}$. Therefore we find that $\norm{A}_{1} \leq \sqrt{d} \norm{A}$ as desired.
\end{proof}

It immediately follows that $\pnorm{1}{A} \leq \sqrt{N}\norm{A}$ for dense $A$, i.e., $d=N$.

\section{Proofs of technical lemmas}
\label{appendix:lemmas}
\begin{proof}[Proof of Lemma~\ref{lemma:truncate}]
The proof outlined here follows closely the proof of Lemma~8 in~\cite{berry2015hamiltonian}.
Recalling the definition of $V_k$ and $V_{\infty}$, we define the weights in $V_k$ by 
\begin{align}
  \alpha_m := \frac{J_m(z)}{C_k},
\end{align}
where $C_k = \sum_{l=-k}^k J_l(z)$. The normalization here is chosen so that $\sum_m a_m =1$ which will give the best result~\cite{berry2015hamiltonian}.\\
Since
\begin{equation}
\sum_{m=-\infty}^{\infty} J_m(z) = \sum_{m=[-\infty:-k-1;k+1:\infty]} J_m(z)  + \sum_{m=-k}^{k} J_m(z) = 1 ,
\end{equation}
observe that we have two error sources. The first one comes from the truncation of the series, and the second one comes from the different renormalization of the terms which introduces an error in the first $\lvert m \rvert \leq k$ terms in the sum.
We therefore start by bounding the normalization factor $C_k$. For the Bessel-functions for all $m$ it holds that $\lvert J_m(z) \rvert \leq \frac{1}{\lvert m\lvert !}\left\lvert \frac{z}{2} \right\rvert^{|m|}$, since $J_{-m}(z) = (-1)^m J_m(z)$~\cite[(9.1.5)]{abramowitz1964handbook}.
For $\lvert m \rvert \leq k$ we can hence find the following bound on the truncated part
\begin{align}
\sum_{m=[-\infty:-k-1;k+1:\infty]} \lvert J_m(z) \rvert &= 2 \sum_{m=k+1}^{\infty} \lvert J_m(z) \rvert \leq  2 \sum_{m=k+1}^{\infty} \frac{\lvert z/2 \rvert^m}{m!} \nonumber\\
 &= 2 \frac{\lvert z/2 \rvert^{k+1}}{(k+1)!} \left(1+\frac{|z/2|}{k+2} + \frac{|z/2|^2}{(k+2)(k+3)}  + \cdots\right) \nonumber\\
 &< 2 \frac{\lvert z/2 \rvert^{k+1}}{(k+1)!} \sum_{m=k+1}^{\infty} \left(\frac{1}{2}\right)^{m-(k+1)} = \frac{ 4 \lvert z/2 \rvert^{k+1}}{(k+1)!}.
\end{align}
Since $\sum_m J_m(z) =1$, based on the normalization, we hence find that 
\begin{equation}
\sum_{m=-k}^k J_m(z) \geq \left( 1- \frac{ 4 \lvert z/2 \rvert^{k+1}}{(k+1)!} \right),
\end{equation}
which is a lower bound on the normalization factor $C_k$. Since $a_m = \frac{J_m(z)}{C_k}$, the correction is small, which implies that 
\begin{equation}
\label{eq:dep_am}
a_m = J_m(z) \left( 1 + O \left( \frac{\lvert z/2 \rvert^{k+1}}{(k+1)!} \right) \right),
\end{equation}
and we have a multiplicative error based on the renormalization.

Next we want to bound the error in the truncation before we join the two error sources.\\
From Eq.~(\ref{eq:relation_sum_exp}) we know that
\begin{equation}
e^{iz \lambda/\Lambda} - 1 = \sum_{m=-\infty}^{\infty} J_m(z) ( \mu_{\pm}^m -1),
\end{equation}
by the normalization of $\sum_m J_m(z)$. From this we can see that we can hence obtain a bound on the truncated $J_m(z)$ as follows.
\begin{equation}
\label{eq:reordering_exp_sum}
\sum_{m=-k}^{k} J_m(z) ( \mu_{\pm}^m -1) = e^{iz \lambda/\Lambda} - 1 - \sum_{\substack{m = [-\infty: -(k+1);\\ (k+1):\infty]}} J_m(z)  ( \mu_{\pm}^m -1).
\end{equation}
Therefore we can upper bound the left-hand side in terms of the exact value of $V_{\infty}$, i.e.\ $e^{iz\lambda/\Lambda}$ if we can bound the right-most term in Eq.~\eqref{eq:reordering_exp_sum}. Using furthermore the bound in Eq.~\eqref{eq:dep_am} we obtain 
\begin{equation}
\label{eq:difference1}
\sum_{m=-k}^{k} a_m(z) ( \mu_{\pm}^m -1) = \left( e^{iz \lambda/\Lambda} - 1 - \sum_{\substack{m = [-\infty: -(k+1);\\ (k+1):\infty]}} J_m(z)  ( \mu_{\pm}^m -1)\right) \left( 1 + O \left( \frac{\lvert z/2 \rvert^{k+1}}{(k+1)!} \right) \right),
\end{equation}
which reduced with $\left\lvert 2^{iz \lambda/\Lambda} -1 \right\rvert \leq \lvert z \lambda/\Lambda \rvert$ and $|z| \leq k$ to
\begin{equation}
\label{eq:difference2}
\sum_{m=-k}^{k} a_m(z) ( \mu_{\pm}^m -1) = e^{iz \lambda/\Lambda} - 1 -  O \left(\sum_{\substack{m = [-\infty: -(k+1);\\ (k+1):\infty]}} J_m(z)  ( \mu_{\pm}^m -1)\right).
\end{equation}

We can then obtain the desired bound $\norm{ V_{\infty} - V_k}$ by reordering the above equation, and using that $\sum_{m=-k}^k a_m(z)=1$ such that we have
\begin{equation}
\label{eq:final_error_bound}
\norm{ V_{\infty} - V_k} = \left\lvert \sum_{m=-k}^k a_m \mu_{\pm}^m - e^{iz \lambda/\Lambda} \right\rvert = O \left( \sum_{\substack{m = [-\infty: -(k+1);\\ (k+1):\infty]}} J_m(z)  ( \mu_{\pm}^m -1)\right).
\end{equation}
We hence only need to bound the right-hand side.\\
For $\mu_+$ we can use that $\lvert \mu_{+}^m - 1 \rvert \leq 2 \lvert m \lambda/ \Lambda \rvert =: 2 \lvert m \nu \rvert$ and obtain the bound $ 2 \frac{\lvert \nu \rvert}{k!}\left\lvert\frac{z}{2} \right\rvert^{k+1} $~\cite{berry2015hamiltonian}. For the $\mu_-$ case we need to refine the analysis and will show that the bound remains the same. 
Let $\nu := \lambda/ \Lambda$ as above. First observe that 
$J_m(z) \mu_-^m + J_{-m} (z) \mu_i^{-m} = J_{-m} (z) \mu_+^{-m} + J_m (z) \mu_+^m$, and it follows that 
\begin{align}
 \sum_{m = -\infty}^{-(k+1)} J_m(z)  ( \mu_{-}^m -1)  &+ \sum_{m = k+1}^{\infty} J_m(z)  ( \mu_{-}^m -1) \nonumber \\
 &=\sum_{m = -\infty}^{-(k+1)} J_m(z)  ( \mu_{+}^m -1)  + \sum_{m = k+1}^{\infty} J_m(z)  ( \mu_{+}^m -1).
\end{align}
Therefore we only need to treat the $\mu_+$ case.
\begin{align}
\left\lvert \sum_{\substack{m = [-\infty: -(k+1);\\ (k+1):\infty]}} J_m(z)  ( \mu_{+}^m -1) \right\rvert &\leq 2 \left\lvert \sum_{m = k+1} ^{\infty} J_m(z)  ( \mu_{+}^m -1)  \right\rvert \nonumber\\
&\leq 2 \sum_{m= k+1}^{\infty} \lvert J_m(z) \rvert  \lvert \mu_{+}^m- 1 \rvert  \nonumber\\ 
& = 2 \sum_{m= k+1}^{\infty} \frac{1}{|m|!}\left\lvert\frac{z}{2} \right\rvert^{|m|}  \lvert \mu_{+}^m- 1 \rvert \nonumber\\
& \leq 4 \sum_{m= k+1}^{\infty} \frac{1}{|m|!}\left\lvert\frac{z}{2} \right\rvert^{|m|} m \lvert \nu \rvert \nonumber\\
&<   \frac{8 \lvert \nu \rvert }{(k+1)!}\left\lvert\frac{z}{2} \right\rvert^{k+1} (k+2) .
\end{align}
Using this bound we hence obtain from Eq.~\eqref{eq:final_error_bound},
\begin{equation}
\label{eq:final_error_bound_2}
\norm{ V_{\infty} - V_k} = \left\lvert \sum_{m=-k}^k a_m \mu_{\pm}^m - e^{iz \lambda/\Lambda} \right\rvert \leq O \left( 
% \frac{\lvert \nu \rvert}{k!} \left\lvert\frac{z}{2} \right\rvert^{k+1}  +
\frac{\lambda}{k!\ \Lambda } \left\lvert\frac{z}{2} \right\rvert^{k+1} \right) = O\left(\frac{\norm{H}(z/2)^{k+1}}{\Lambda k!}\right).
\end{equation}
In order for the above equation being upper-bounded by $\epsilon$, it suffices to choose some $k$ that is upper bounded as claimed.
\end{proof}

\begin{proof}[Proof of Lemma~\ref{lemma:lcu-const}]
  Let $s = \sum_{j=0}^{m-1}|\alpha_j|$.
  We first define the unitary operator $B$ to prepare the coefficients:
  \begin{align}
	B\ket{0}\ket{0} = \left(\sqrt{\frac{s}{2}}\ket{0} + \sqrt{1-\frac{s}{2}}\ket{1}\right)\otimes \frac{1}{\sqrt{s}}\sum_{j=0}^{m-1}\sqrt{\alpha_j}\ket{j}.
  \end{align}
  Define the unitary operator $W$ as $W = (B^{\dag}\otimes I)(I \otimes \mbox{multi-}U)(B\otimes I)$. We claim that $W$ performs the desired mapping, as
  \begin{align}
	W\ket{0}\ket{0}\ket{\psi} =& (B^{\dag}\otimes I)(I\otimes \mbox{multi-}U)(B\otimes I) \ket{0}\ket{0}\ket{\psi} \nonumber\\
	=& \frac{1}{\sqrt{2}}(B^{\dag}\otimes I)\ket{0}\sum_{j=0}^{m-1}\sqrt{\alpha_j}\ket{j}U_j\ket{\psi} + \sqrt{\frac{2-s}{2s}}(B^{\dag}\otimes I)\ket{1}\sum_{j=0}^{m-1}\sqrt{\alpha_j}\ket{j}U_j\ket{\psi} \nonumber\\
	=& \frac{1}{2}\ket{0}\ket{0}\sum_{j=0}^{m-1}\alpha_jU_j\ket{\psi} + \sqrt{\gamma}\ket{\Phi^{\bot}},
  \end{align}
  where $\ket{\Phi^{\bot}}$ is a state satisfying $(\ketbra{0}{0}\otimes\ketbra{0}{0}\otimes I)\ket{\Phi^{\bot}} = 0$, and $\gamma$ is some normalization factor.

  The number of applications of $\mbox{multi-}U$ is constant, as in the definition of $W$. To implement the unitary operator $B$, $O(m)$ 1- and 2-qubit gates suffice.
\end{proof}

\begin{proof}[Proof of Lemma~\ref{lemma:oaa}]
We have
\begin{align}
  -\widetilde{W}(I-2P)\widetilde{W}^{\dag}(I-2P)\widetilde{W}\ket{0}\ket{0}\ket{\psi} =& (\widetilde{W} + 2P\widetilde{W} - 4\widetilde{W}P\widetilde{W}^{\dag}P\widetilde{W})\ket{0}\ket{0}\ket{\psi} \nonumber\\
  =& (\widetilde{W} + 2P\widetilde{W} - 4\widetilde{W}P\widetilde{W}^{\dag}PP\widetilde{W}P)\ket{0}\ket{0}\ket{\psi} \nonumber\\
  =& \widetilde{W}\ket{0}\ket{0}\ket{\psi} + \ket{0}\ket{0}\widetilde{V}\ket{\psi} - \widetilde{W}\left(\ket{0}\ket{0}\widetilde{V}^{\dag}\widetilde{V}\ket{\psi}\right) 
\end{align}
Because $\norm{\widetilde{V}-V} \leq \epsilon$ and $V$ is a unitary operator, we have $\norm{\widetilde{V}^{\dag}\widetilde{V} - I} = O(\epsilon)$. Therefore, we have
\begin{align}
  \norm{-\widetilde{W}(I-2P)\widetilde{W}^{\dag}(I-2P)\widetilde{W}\ket{0}\ket{0}\ket{\psi} - \ket{0}\ket{0}\widetilde{V}\ket{\psi}} = O(\epsilon).
\end{align}
Thus
\begin{align}
  \norm{-\widetilde{W}(I-2P)\widetilde{W}^{\dag}(I-2P)\widetilde{W}\ket{0}\ket{0}\ket{\psi} - \ket{0}\ket{0}V\ket{\psi}} = O(\epsilon).
\end{align}
\end{proof}

\end{document}